\newtheorem{mydef}{Definition}
\newtheorem{theorem}{Theorem}
\newtheorem{lemma}{Lemma}
\newtheorem{corollary}{Corollary}
\newtheorem{remark}{Remark}
\def\VR{\kern-\arraycolsep\strut\vrule &\kern-\arraycolsep}
\def\vr{\kern-\arraycolsep & \kern-\arraycolsep}
\newcommand*{\Scale}[2][4]{\scalebox{#1}{$#2$}}
\begin{document}

\title{Perfectly Secure Index Coding}

\author[]{Mohammad Mahdi Mojahedian}
\author[]{Mohammad Reza Aref}
\author[]{Amin Gohari\thanks{This work was partially supported by Iran National Science Foundation (INSF) under contract No. 92/32575. This paper was presented in part at ISIT 2015.}}

\affil[]{\footnotesize Information Systems and Security Lab. (ISSL), Sharif University of Technology, Tehran, Iran
\\m\_mojahedian@ee.sharif.edu, aref@sharif.edu, aminzadeh@sharif.edu}

\allowdisplaybreaks
\renewcommand\Authands{ and }

\maketitle

\begin{abstract}
In this paper, we investigate the index coding problem in the presence of an eavesdropper. Messages are to be sent from one transmitter to a number of legitimate receivers who have side information about the messages, and share a set of secret keys with the transmitter. We assume perfect secrecy, meaning that the eavesdropper should not be able to retrieve any information about the message set. We study the minimum key lengths for zero-error and perfectly secure index coding problem. On one hand, this problem is a generalization of the index coding problem (and thus a difficult one). On the other hand, it is a generalization of the Shannon's cipher system. We show that a generalization of Shannon's one-time pad strategy is optimal up to a multiplicative constant, meaning that it obtains the entire boundary of the cone formed by looking at the secure rate region from the origin. Finally, we consider relaxation of the perfect secrecy and zero-error constraints to weak secrecy and asymptotically vanishing probability of error, and provide a secure version of the result, obtained by Langberg and Effros, on the equivalence of zero-error and  $\epsilon$-error regions in the conventional index coding problem.
\end{abstract}

\begin{IEEEkeywords}
Index coding, Shannon cipher system, perfect secrecy, common and private keys, zero-error communication.
\end{IEEEkeywords}

% For peer review papers, you can put extra information on the cover
% page as needed:
% \ifCLASSOPTIONpeerreview
% \begin{center} \bfseries EDICS Category: 3-BBND \end{center}
% \fi
%
% For peerreview papers, this IEEEtran command inserts a page break and
% creates the second title. It will be ignored for other modes.
\IEEEpeerreviewmaketitle

\section{Introduction}
\label{sec:intro}

An index coding problem comprises of a server, $u$ clients and a set of distinct messages $\boldsymbol{M}=\{M_1,M_2,\cdots,M_t\}$. Each client has a subset of $\boldsymbol{M}$ as its side information, and wants to learn another subset of the message set which it has not. The goal is to find the minimum number of information bits that should be broadcast by the server so that each client can recover its desired messages with \emph{zero-error} probability. This minimum required bits of information is called the optimal index code length. The index coding problem was originally introduced by Birk and Kol \cite{biko98} in a satellite communication scenario. Consider a satellite that broadcasts a set of messages to a number of clients. Each receiver may miss some of the messages due to limited storage capacity, lack of interest, interrupted reception, or any other reason. The clients then  inform the server about the messages they desire but are missing, as well as 
their side information via a feedback channel, and the server attempts to deliver their requested information by broadcasting information to all the clients. Index coding studies the efficient way of satisfying the needs of clients with minimum transmission from the satellite. To illustrate the significance of index coding, consider a communication scenario with one server, two clients and a message set $\{M_1,M_2\}$ of binary random variables. The first client has $M_2$ as side information and wants $M_1$, yet the second one has $M_1$ and wants $M_2$. The server can send the XOR of $M_1$ and $M_2$, instead of broadcasting each of them individually.

An index coding problem, in its most general case, can be represented by a directed bipartite graph \cite{nete12} or a hypergraph \cite{AlonLub2008}. However, it admits a simple graphical representation on a directed graph if each message is desired by only one client. In this case, without loss of generality one can assume that the number of  receivers and messages are the same (a client that desires two different messages can be replaced with two identical clients that desire a message each). Many of the known results in the literature are for this special case, which we also adopt in this paper.

Several upper and lower bounds are known for the optimal index code length $\ell^{\ast}(G)$ \cite{lust09,AlonLub2008,bazi11,tesa12,blkl13,blkl10,biko98,arbban13,shka14,nete12}.  Most of proposed bounds are graph-theoretic based, but \cite{arbban13} considers this problem from an information-theoretic viewpoint and computes the capacity region of index coding problem with up to five messages.  When we restrict ourselves to linear operations, the optimal linear index code is equal to a graph parameter called min-rank \cite{bazi06,bazi11}.  However, the  computation of min-rank is NP-hard \cite{peeters1996orthogonal}. Furthermore, linear index coding can be suboptimal in general \cite{lust09}. Index coding is a special case of the network coding problem. On the other hand, \cite{elrspr10,effelr12} show that any network coding problem can be reduced to an index coding problem.

Security aspects of network coding has been studied in \cite{bhattad2005weakly,bloch2011physical,jaggi2007resilient,yeung2008information}. In particular, secure throughput of a network coding problem in the presence of an active adversary who can eavesdrop and corrupt some links are studied. A similar problem with active adversaries has been studied in \cite{dau2011secure} for the linear index coding problem.

In this paper, we study secrecy in index coding from a different perspective. Our approach is similar to that of Shannon  in his seminal paper \cite{shannon49}. He analyzed the cipher system shown in Fig.~\ref{fig:Shannon_Cipher}, comprising of a message $M$, a cipher text $C$, and a key $K$ - a secret common randomness shared between the sender and the legitimate receiver. The sender wishes to transmit $M$ to the legitimate receiver while keeping it secret from the eavesdropper. To this end, the sender transmits $C$ (a function of $M$ and $K$) on a public noiseless channel. By receiving $C$, the eavesdropper should not be able to attain any information about $M$. Shannon adopted the notion of \emph{perfect secrecy}, of statistical independence between the message and the cipher text, \emph{i.e.,} $I(M;C)=0$. Moreover, Shannon assumed \emph{zero-error} recovery of the message: the legitimate receiver should be able to retrieve the message from $C$ and $K$, imposing the constraint $H(M|K,C)=0$. Shannon proved that the cipher system of Fig.~\ref{fig:Shannon_Cipher} is perfectly secure, if the following inequality is satisfied:
\begin{equation}
\label{Shannon_Condition}
H(K)\geq H(M).
\end{equation} 
Roughly speaking, perfect secrecy is possible if and only if the key length is  greater than or equal to the message length. Achievability follows from the one-time pad scheme.
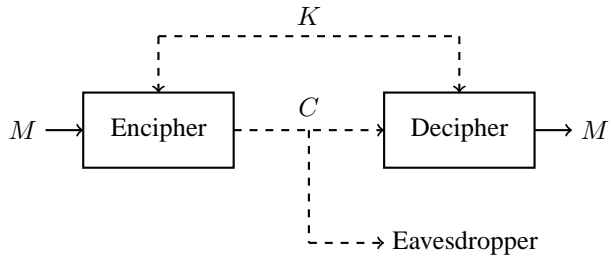
\begin{figure}[ht]
\centering
\begin{tikzpicture}

\draw[thick,->] (-.5,.5) -- (0,.5) node [black,inner sep=10pt, pos=0.5, anchor=east] {$M$};

\draw[thick] (0,0) rectangle (2,1) node [black,inner sep=0pt, pos=0.5, anchor=center] {$\text{Encipher}$};

\draw[thick,dashed,->] (2,.5) -- (4,.5) node [black,inner sep=5pt, pos=0.5, anchor=south] {$C$};

\draw[thick,dashed] (3,.5) -- (3,-1);
\draw[thick,dashed,->] (3,-1) -- (4,-1) node [black,inner sep=17pt, pos=0.5, anchor=west] {$\text{Eavesdropper}$};

\draw[thick] (4,0) rectangle (6,1) node [black,inner sep=0pt, pos=0.5, anchor=center] {$\text{Decipher}$};

\draw[thick,->] (6,.5) -- (6.5,.5) node [black,inner sep=10pt, pos=0.5, anchor=west] {$M$};

\draw[thick,dashed,->] (1,1.75) -- (1,1);
\draw[thick,dashed] (1,1.75) -- (5,1.75) node [black,inner sep=4pt, pos=0.5, anchor=south] {$K$};
\draw[thick,dashed,->] (5,1.75) -- (5,1);

\end{tikzpicture}
\caption{Shannon cipher system.} 
\label{fig:Shannon_Cipher}
\end{figure}

The goal of this paper is to derive a condition similar to inequality \eqref{Shannon_Condition} for a general \emph{zero-error and perfectly secure} index coding problem (observe that Shannon's cipher system is a special index coding problem with one receiver). Consider a scenario with $t$ legitimate receivers, an eavesdropper, and a set of keys $\boldsymbol{K}$ shared between the sender and the legitimate receivers. The question is to find the minimum entropy of keys required for perfect secrecy. Moreover, the effect of perfect secrecy condition on the optimal index code length is studied.

This paper deals with the three main theorems. The first one, proves a relation between secure and conventional (without secrecy) index coding problems. For a secure index coding problem, we propose a generalized one-time pad strategy which is shown to be optimal up to a multiplicative constant. The second theorem is a linear version of the first theorem, and finally, the last theorem discusses the equivalency of rate region in weakly and perfectly secure index coding problems (with zero or vanishing error probabilities).

The rest of this paper is organized as follows. In Section \ref{Def_Note}, the system model is defined. Section \ref{Main_Results_Sec} lays out the main results. We state the proofs in Section \ref{sec:proofs}. Section \ref{Conclusion_Sec} concludes this paper.

\textbf{Notation.} Random variables are shown in capital letters, whereas their realizations are shown in lowercase letters. Bold letters are used to denote sets or vectors. Alphabet set of random variables are shown in calligraphic font.  We use $[t]$ to denote $\{1,2,\cdots, t\}$ and $X_\mathbf{S}$ for some subset $\mathbf{S}$ of indices to denote the collection of $(X_s: s\in \mathbf{S})$. We use $[a]_{+}$ to denote $a$ if it is non-negative and zero otherwise. We use the term ``conventional index code'' to denote a classical index coding problem with no adversary and secret keys.

\section{System Model}
\label{Def_Note}
Conventional index coding is the problem of sending a set of $t$ messages $\boldsymbol{M}=\{M_1,M_2,\cdots,M_t\}$ to $t$ receivers. The $i$-th receiver wants the message $M_i$, having a subset of remaining messages $\boldsymbol{M}\setminus M_i=\{M_1,M_2,\cdots,M_{i-1},\\M_{i+1},\cdots,M_t\}$ as side information. The side information set of $i$-th receiver is shown by $\boldsymbol{S}_i$. The goal is to minimize the amount of information that should be broadcast to the receivers for decoding their desired messages without any error. 

Now, assume that an eavesdropper coexists with the legitimate receivers. Just like legitimate receivers, the eavesdropper receives the index code $C$. However, we require that  the eavesdropper should not be able to obtain any information about message set $\boldsymbol{M}$ from index code $C$ (perfect secrecy). From an information theoretic perspective, the mutual information of $\boldsymbol{M}$ and $C$ should be zero. To accomplish this, we assume that the transmitter and the legitimate receivers share common and private secret keys. The common key $K$ is shared among the sender and all of the legitimate receivers, and the private key $K_i, i\in[t]$ is shared between the sender and the $i$-th receiver. We are interested in the minimum entropy of the keys needed for perfect secrecy. 

Below, we formally define a secure index code.

\begin{mydef}[Secure Index Code]
\label{Def_Secure_IC}
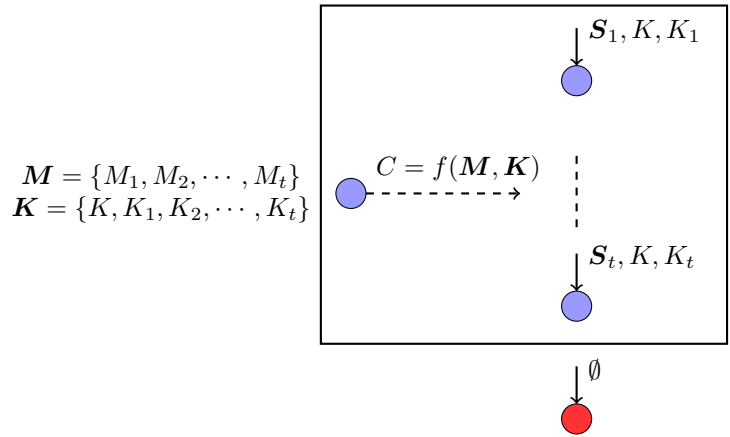
\begin{figure}[!b]
\centering
\begin{tikzpicture}

\fill[blue!40!white, draw=black] (0,0) circle (.2cm) node[black,inner sep=15pt, pos=0, anchor=east,align=center] {$\boldsymbol{M}=\{M_1,M_2,\cdots,M_t\}$\\$\boldsymbol{K}=\{K,K_1,K_2,\cdots,K_t\}$};
\fill[blue!40!white, draw=black] (3,1.5) circle (.2cm);
\fill[blue!40!white, draw=black] (3,-1.5) circle (.2cm);
\fill[red!80!white, draw=black] (3,-3) circle (.2cm);

\draw[thick,->] (3,2.2) -- (3,1.7) node [black,inner sep=4pt, pos=0.1, anchor=west] {$\boldsymbol{S}_1,K,K_1$};
\draw[thick,->] (3,-.8) -- (3,-1.3) node [black,inner sep=4pt, pos=0.1, anchor=west] {$\boldsymbol{S}_t,K,K_t$};
\draw[thick,->] (3,-2.3) -- (3,-2.8) node [black,inner sep=4pt, pos=0.1, anchor=west] {$\emptyset$};

\draw[thick,dashed,->] (.2,0) -- (2.25,0) node [black,inner sep=4pt, pos=.6, anchor=south] {$C=f(\boldsymbol{M},\boldsymbol{K})$};

\draw[thick,dashed] (3,.5) -- (3,-.5);

\draw[thick] (-.4,-2) rectangle (5,2.5);

\end{tikzpicture}
\caption{The schematic of secure index coding scenario.} 
\label{fig:Secure_IC}
\end{figure}
Consider the scenario of Fig.~\ref{fig:Secure_IC} consisting of a sender (who broadcasts data), $t$ legitimate receivers, and an illegal receiver named as the eavesdropper. Also, assume a key set $\boldsymbol{K}=\{K,K_1,K_2,\cdots,K_t\}$ of common and private keys. A secure index coding scheme consists of an encoder and $t$ decoders satisfying the perfect secrecy condition, defined as follows:

1- Encoder: An encoder $f$ maps the message set $\boldsymbol{M}$ and the key set $\boldsymbol{K}$ to a code symbol $C\in \mathcal{C}$,
\begin{equation*}
f: \mathcal{M}_1\times\mathcal{M}_2\times\cdots\times\mathcal{M}_t
\times\mathcal{K}\times\mathcal{K}_1\times\cdots\times\mathcal{K}_t\times \mathcal W
\rightarrow \mathcal{C}.
\end{equation*}
where $\mathcal{M}_i$, $\mathcal{K}$, $\mathcal{K}_i$, and $\mathcal{C}$ are the alphabet sets of $M_i$, $K$, $K_i$, and $C$, respectively. Here $\mathcal{W}$ is the alphabet set for $W$, which is the private source of randomness for the encoder, independent of all previously defined random variables. If $|\mathcal W|=1$, the encoder will be deterministic. Random variable $W$ is known only to the encoder.

2- Decoder: A decoder $g_i, i=1,\cdots,t$ recovers $M_i$ from code symbol $C$, its side information $\boldsymbol{S}_i$, as well as the keys $K$ and $K_i$,
\begin{equation}
g_i: \mathcal{C}\times \mathcal{S}_i\times\mathcal{K}\times\mathcal{K}_i\rightarrow \mathcal{M}_i.\label{eqn:defg-i1}
\end{equation} 
The recovery is exact: $g_i(c, \textbf{s}_i, k, k_i)=m_i$. Thus, for any $i$ and arbitrary input distribution on the message set $\boldsymbol{M}$, we should have:
\begin{equation*}
H(M_i|C,\boldsymbol{S}_i,K,K_i)=0.
\end{equation*}
It means that each receiver should be able to retrieve its desired message from its side information, the code $C$, as well as the keys $K$ and $K_i$ with error probability zero. 

3- Perfect secrecy condition: assuming that $K$ and $K_i$ are mutually independent and uniform over their alphabet sets, the conditional pmf
$p(C=c|\boldsymbol{M}=\boldsymbol{m})$ should not depend on the value of $\boldsymbol{m}$, for any given $c$. Equivalently, for any distribution on input message $\boldsymbol{M}$, we should have:
\begin{equation}
I(\boldsymbol{M};C)=0, \qquad \forall p_{\boldsymbol{M}}(\boldsymbol{m}) \label{eqn:addedM}
\end{equation} 
as long as the message set $\boldsymbol{M}$, the key set $\boldsymbol{K}$ and private randomness $W$ are mutually independent.

4- Rate vector: corresponding to a secure index code, a rate vector \begin{align}\boldsymbol{r}=(r_1,r_2,\cdots,r_t, r_k, r_{k_1}, \cdots, r_{k_t})\label{def:r-vec}\end{align} is defined, where 
\begin{align*}
r_i=\frac{\log\lvert\mathcal{M}_i\rvert}{\log\lvert\mathcal{C}\rvert},\qquad
r_k=\frac{\log\lvert\mathcal{K}\rvert}{\log\lvert\mathcal{C}\rvert},\qquad
r_{k_i}=\frac{\log\lvert\mathcal{K}_i\rvert}{\log\lvert\mathcal{C}\rvert}.
\end{align*}
\end{mydef}

\begin{remark}\emph{
Throughout, we reserve the notation $``r_k"$ for the rate of common key. It should not be confused with $r_1, r_2, \cdots, r_t$ which are message rates. When we write $r_i$ for a variable $i\in[t]$, we mean one of  $r_1, r_2, \cdots, r_t$, and not $r_k$.} \end{remark}

\begin{remark}\emph{ A secure index code is an extension of the conventional index code with no adversary.  If we consider a zero-error index code that does not necessarily satisfy the perfect secrecy constraint, and has a  rate vector of the following form,
 \begin{align}\boldsymbol{r}=(r_1,r_2,\cdots,r_t, 0, 0, \cdots, 0)\label{def:r-vec1},\end{align}
\emph{i.e.,} no secret keys exist $r_k=r_{k_i}=0$, then we get a conventional zero-error index code with rate vector
 \begin{align}(r_1, r_2, \cdots, r_t)\label{def:r-vec12}.\end{align}}
\end{remark}

Linear index codes form a subclass of the general problem, in which both encoder and decoders are linear functions.

\begin{mydef}[Linear Index Code]\label{def2}
A linear index code includes a linear encoder and $t$ linear decoders so that:

1- Encoder: A \underline{linear} function $f$ mapping the message set $\boldsymbol{M}$ and secret keys $\boldsymbol{K}$ to a code symbol $C\in\mathbb{F}^{l}$,
\begin{equation*}
f: \mathbb{F}^{l_1}\times\mathbb{F}^{l_2}\times\cdots\times\mathbb{F}^{l_t}\times\mathbb{F}^{l_k}\times\mathbb{F}^{l_{k_1}}
\times\mathbb{F}^{l_{k_2}}\times\cdots\times\mathbb{F}^{l_{k_t}}\times \mathbb{F}^{l_w}\rightarrow \mathbb{F}^{l}.
\end{equation*}
where $\mathbb{F}$ is a finite field, $l_i$, $l_k$, $l_{k_i}$, $l_w$ and $l$ are respectively the length of message $M_i$, the length of the common key $K$, the length of private key $K_i$, the length of private randomness $W$,  and the length of index code $C$. In other words, $M_i$, $K$, $K_i$, $W$ and $C$ are sequences of length $l_i$, $l_k$, $l_{k_i}$, $l_w$ and $l$ in the field $\mathbb{F}$. 

2- Decoder: A linear function $g_i$ for $i\in[t]$ that acts on code symbol $C$, side information $\boldsymbol{S}_i$ and secret keys $K, K_i$ to  recover the message $M_i$
\begin{equation*}
g_i: \mathbb{F}^{l}\times \mathcal{S}_i\times\mathbb{F}^{l_k}\times\mathbb{F}^{l_{k_i}}\rightarrow \mathbb{F}^{l_i}.
\end{equation*} 

3- Rate vector: the rate vector of linear index coding is defined as follows:
$$\boldsymbol{r}=(r_1,r_2,\cdots,r_t, r_k, r_{k_1}, \cdots, r_{k_t})$$where
\begin{align*}
r_i=\frac{l_i}{l},\qquad
r_k=\frac{l_{k}}{l},\qquad
r_{k_i}=\frac{l_{k_i}}{l}.
\end{align*}
\end{mydef}

Each code symbol is a linear function of the components of $M_{i}$, $K$ and $K_{i}$, \emph{i.e.,}
$$C_i=\sum_{p=1}^{l_k}\alpha^{i}_{p}K(p)+\sum_{j=1}^{t}\sum_{p=1}^{l_{k_j}}\beta^{i}_{jp}K_{j}(p)+\sum_{j=1}^t\sum_{p=1}^{l_j}\gamma^{i}_{jp}M_{j}(p)+\sum_{p=1}^{l_w}\psi_{p}^iW(p)$$
for some coefficients $\alpha^{i}_{p}$, $\beta^{i}_{jp}$, $\gamma^{i}_{jp}$ and $\psi_{p}^i$ in $\mathbb{F}$. Here,
\begin{align*}
M_i&=(M_i(1), M_i(2), \cdots, M_i(l_i)),\\
K&=(K(1), K(2), \cdots, K(l_k)),\\
K_i&=(K_i(1), K_i(2), \cdots, K_i(l_{k_i})),
\end{align*}
and
\begin{align*}
W&=(W(1), W(2), \cdots, W(l_w))
\end{align*}
are strings of symbols in $\mathbb{F}$.
Thus, the encoding scheme in linear index coding problem has the following matrix representation\footnotesize{
\begin{align}
C=
\begin{pmatrix}
C_1\\
C_2\\
\vdots\\
C_l
\end{pmatrix}=
\begin{pmatrix}
\boldsymbol{\alpha}^1&\boldsymbol{\beta}_1^1&\cdots &\boldsymbol{\beta}_{t}^1&\boldsymbol{\psi}^1
&\boldsymbol{\gamma}_1^1&\cdots &\boldsymbol{\gamma}_{t}^1\\
\boldsymbol{\alpha}^2&\boldsymbol{\beta}_1^2&\cdots &\boldsymbol{\beta}_{t}^2&\boldsymbol{\psi}^2
&\boldsymbol{\gamma}_1^2&\cdots &\boldsymbol{\gamma}_{t}^2\\
\vdots &\vdots&&\vdots&\vdots&\vdots&&\vdots\\
\boldsymbol{\alpha}^l&\boldsymbol{\beta}_1^l&\cdots &\boldsymbol{\beta}_{t}^l&\boldsymbol{\psi}^l
&\boldsymbol{\gamma}_1^l&\cdots &\boldsymbol{\gamma}_{t}^l
\end{pmatrix}
\begin{pmatrix}
K\\
K_1\\
\vdots\\
K_t\\
W\\
M_1\\
\vdots\\
M_{t}
\end{pmatrix},\label{eqn5}
\end{align}}\normalsize
where
\begin{equation*}
\begin{matrix}
\boldsymbol{\alpha}^i&= &(\alpha_{1}^i&\alpha_{2}^i&\cdots &\alpha_{l_k}^i),\\
\boldsymbol{\beta}_{j}^i&=& (\beta_{j1}^i&\beta_{j2}^i&\cdots &\beta_{jl_{k_j}}^i\hspace{-.3mm}),\\
\boldsymbol{\gamma}_{j}^i&= &(\gamma_{j1}^i&\gamma_{j2}^i&\cdots &\gamma_{jl_j}^i),\\
\boldsymbol{\psi}^i&= &(\psi_{1}^i&\psi_{2}^i&\cdots &\psi_{l_w}^i).
\end{matrix}
\end{equation*}
which construct the code generation matrix shown by $\Pi$ throughout this paper.

\begin{mydef}[One-Shot and Asymptotic Index Coding]
In the one-shot case, a single use of the index coding problem is considered. In other words, there are fixed message alphabet sets $\mathcal{M}_1,\mathcal{M}_2,\cdots,\mathcal{M}_t$, and the goal is to find an index code with minimum amount of keys and public communication that would ensure zero-error perfect secrecy. In other words, we are looking for the set of all possible minimal rate vectors
$$\boldsymbol{r}=(r_1,r_2,\cdots,r_t, r_k, r_{k_1}, \cdots, r_{k_t}),$$
as in \eqref{def:r-vec} for fixed alphabet sets $\mathcal{M}_1,\mathcal{M}_2,\cdots,\mathcal{M}_t$.

On the other hand, the asymptotic case asks for the set of all possible rate vectors $\boldsymbol{r}$ that are asymptotically achievable, \emph{i.e.,} there exists a sequence of zero-error and perfectly secure index codes whose rate vectors converge to $\boldsymbol{r}$.
\end{mydef}

\begin{mydef}
The asymptotic secure index coding region, $\mathcal{R}_{\mathsf{Secure}}$, is defined to be the set of all asymptotically achievable tuples
$$\boldsymbol{r}=(r_1,r_2,\cdots,r_t, r_k, r_{k_1}, \cdots, r_{k_t}).$$
The conventional asymptotic index coding region is defined similarly using the achievable rate vectors as in equation \eqref{def:r-vec12}. We denote this regions by $\mathcal{R}$.
\end{mydef}
\begin{remark}\emph{
 Observe that the region $\mathcal{R}_{\mathsf{Secure}}$ specifies $\mathcal{R}$ since
 \begin{align}\boldsymbol{r}=(r_1,r_2,\cdots,r_t, \infty, \infty, \cdots, \infty),\end{align}
is in the secure rate region if and only if $(r_1,r_2,\cdots,r_t)$ is in the conventional zero-error index code. Thus, finding the region $\mathcal{R}_{\mathsf{Secure}}$ is at least as difficult as finding $\mathcal{R}$. We will show that finding the difficulty of finding $\mathcal{R}_{\mathsf{Secure}}$ when viewed from the origin is as difficult as finding $\mathcal{R}$.}
\end{remark}

\begin{remark}\emph{
In spite of the fact that the asymptotic case is commonly related to vanishing instead of zero probability of error, it has been shown in \cite{laneff11} that in the conventional index coding (with no adversary or secret keys), zero and asymptotic error capacities are the same. }
\end{remark}
\begin{remark}\emph{
Clearly, were a rate vector $\boldsymbol{r}$ one-shot achievable, it is also asymptotically achievable. Also, if $(r_1,r_2,\cdots,r_t, r_k, r_{k_1}, \cdots, r_{k_t})$ is achievable, then so is $(r_1-\alpha_1,r_2-\alpha_2,\cdots,r_t-\alpha_t, r_k+\beta_{k}, r_{k_1}+\beta_{k_1}, \cdots, r_{k_t}+\beta_{k_t})$  for any non-negative values of $\alpha_i$ and $\beta_k$ and $\beta_{k_i}$.
}
\end{remark}

%%%%% Main Results %%%%%%

\section{Main Results}
\label{Main_Results_Sec}

\subsection{Generalized One-Time Pad Strategy}

Without loss of generality, let us assume a three-user case. As shown in Fig. \ref{fig:Generalized_OTP}, a possible strategy for the secure index coding problem is to use private key $K_i$ and XOR it with part of the message $M_i$. This way, we can privately communicate  parts of the messages. Then, for the remaining parts of the messages, we can find the optimal index code and XOR it with the common key $K$. This can be seen as a generalized version of one-time pad scheme which is used in the Shannon's cipher system. We will prove that this modified version of one-time pad strategy is optimal up to a multiplicative constant.

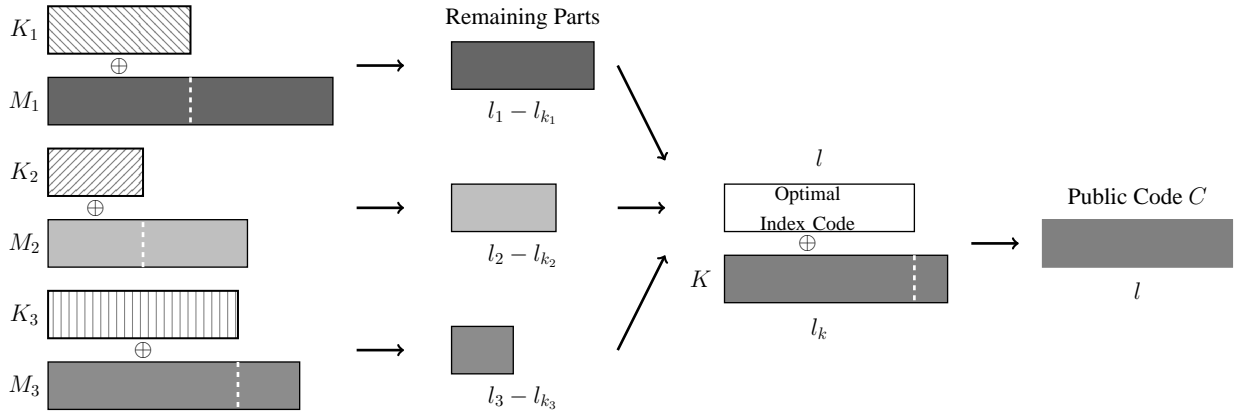
\begin{figure}[H]
\centering
\resizebox {\columnwidth} {!} {
\begin{tikzpicture}[scale=1]

\draw[very thick,black,pattern=north west lines, pattern color=black!50] (0,0) rectangle (3,1);
\node at (-.5,.5) [align=center] {\Large $K_1$};

\node at (1.5,-.25) [align=center] {\Large $\oplus$};

\draw[thick,black,fill=black!60] (0,-1.5) rectangle (6,-.5);
\node at (-.5,-1) [align=center] {\Large $M_1$};
\draw[ultra thick,dashed,white] (3,-.5)--(3,-1.5);

\draw[very thick,black,pattern=north east lines, pattern color=black!50] (0,-3) rectangle (2,-2);
\node at (-.5,-2.5) [align=center] {\Large $K_2$};

\node at (1,-3.25) [align=center] {\Large $\oplus$};

\draw[thick,black,fill=black!25] (0,-4.5) rectangle (4.2,-3.5);
\node at (-.5,-4) [align=center] {\Large $M_2$};
\draw[ultra thick,dashed,white] (2,-4.5)--(2,-3.5);

\draw[very thick,black,pattern=vertical lines, pattern color=black!50] (0,-6) rectangle (4,-5);
\node at (-.5,-5.5) [align=center] {\Large $K_3$};

\node at (2,-6.25) [align=center] {\Large $\oplus$};

\draw[thick,black,fill=black!45] (0,-7.5) rectangle (5.3,-6.5);
\node at (-.5,-7) [align=center] {\Large $M_3$};
\draw[ultra thick,dashed,white] (4,-6.5)--(4,-7.5);

%%%%%%%%%%%%%%%%%%%%%%%%%%%%%%%%%%%%%%%%%%%

\draw[ultra thick,->] (6.5,-.25)--(7.5,-.25);
\draw[ultra thick,->] (6.5,-3.25)--(7.5,-3.25);
\draw[ultra thick,->] (6.5,-6.25)--(7.5,-6.25);

%%%%%%%%%%%%%%%%%%%%%%%%%%%%%%%%%%%%%%%%%%%
\node at (10,.75) [align=center] {\Large Remaining Parts};
\draw[thick,black,fill=black!60] (8.5,-.75) rectangle (11.5,.25);
\node at (10,-1.25) [align=center] {\Large $l_1-l_{k_1}$};

\draw[thick,black,fill=black!25] (8.5,-3.75) rectangle (10.7,-2.75);
\node at (10,-4.25) [align=center] {\Large $l_2-l_{k_2}$};

\draw[thick,black,fill=black!45] (8.5,-6.75) rectangle (9.8,-5.75);
\node at (10,-7.25) [align=center] {\Large $l_3-l_{k_3}$};

%%%%%%%%%%%%%%%%%%%%%%%%%%%%%%%%%%%%%%%%%%%

\draw[ultra thick,->] (12,-.25) -- (13,-2.25);
\draw[ultra thick,->] (12,-3.25) -- (13,-3.25);
\draw[ultra thick,->] (12,-6.25) -- (13,-4.25);

%%%%%%%%%%%%%%%%%%%%%%%%%%%%%%%%%%%%%%%%%%

\draw[thick,black] (14.25,-3.75) rectangle (18.25,-2.75);

\node at (16,-3.25) [align=center] {\large Optimal\\\large Index Code};

\node at (16.25,-2.2) [align=center] {\Large $l$};

\node at (16,-4) [align=center] {\Large $\oplus$};

\draw[thick,black,fill=gray] (14.25,-5.25) rectangle (18.95,-4.25);
\node at (16.25,-5.8) [align=center] {\Large $l_k$};
\node at (13.75,-4.75) [align=center] {\Large $K$};
\draw[ultra thick,dashed,white] (18.25,-4.25)--(18.25,-5.25);

\draw[ultra thick,->] (19.45,-4)--(20.45,-4);

\draw[thick,gray,fill=gray] (20.95,-4.5) rectangle (24.95,-3.5);
\node at (22.95,-3) [align=center] {\Large Public Code $C$};

\node at (22.95,-5) [align=center] {\Large $l$};

\end{tikzpicture}
}
\caption{Generalized one-time pad strategy. Here message lengths, common key length, private key lengths and the index code length, are denoted by the $l_i$, $l_k$, $l_{k_i}$ and $l$, respectively.} 
\label{fig:Generalized_OTP}
\end{figure}

In  Fig.~\ref{fig:Generalized_OTP}, the remaining parts of the messages are secured by XORing them with symbols of $K$. Therefore, $l_k$ should be greater than or equal to the length of optimal index code length
needed for communicating the remaining parts of the messages, \emph{i.e.,} $l_k\geq l$. In order to be able to utilize the generalized one-time pad strategy, a further constraint needs to be met. In the index code for the remaining parts of the messages, we have compressed $l_i-l_{k_i}$ symbols from user $i$ into $l$ index symbols, and therefore the rate of user $i$ in this index code is equal to
\begin{equation*}
\frac{l_i-l_{k_i}}{l}\overset{(a)}{\geq}\frac{r_i-r_{k_i}}{r_k}\qquad i=1,2,3.
\end{equation*}
where $(a)$ comes from perfect secrecy condition. Thus, the rate vector \begin{equation}\label{eqn:jadid}\left(\frac{r_1-r_{k_1}}{r_k},\frac{r_2-r_{k_2}}{r_k},\frac{r_3-r_{k_3}}{r_k}\right),\end{equation} must belong to the conventional index coding problem rate region (without secrecy constraints).
The generalized one-time pad strategy works if the rate tuple given in equation \eqref{eqn:jadid}, corresponding to the secure index coding rate tuple $(r_1, r_2, r_3, r_k, r_{k_1}, r_{k_2}, r_{k_3})$, belongs to the conventional index coding region. The main theorem of this paper provides a converse to this result, up to a constant multiplicative factor.

\subsection{Optimality of generalized one-time pad up to a multiplicative constant}

\begin{theorem}
\label{Main_theorem} Given non-negative values for $r_1,r_2,\cdots,r_t, r_k, r_{k_1}, \cdots, r_{k_t}$, the following three statements are equivalent:
\begin{align*}(a):&\qquad \exists \alpha>0:\quad \alpha\cdot(r_1,r_2,\cdots,r_t, r_k, r_{k_1}, \cdots, r_{k_t})\in \mathcal{R}_{\mathsf{Secure}},\\
&\Longleftrightarrow\\
(b):&\qquad \exists \alpha>0:\quad\alpha\cdot([r_1-r_{k_1}]_+,[r_2-r_{k_2}]_+,\cdots, [r_t-r_{k_t}]_+, r_k, 0, \cdots, 0)\in \mathcal{R}_{\mathsf{Secure}},
\\ &\Longleftrightarrow\\
(c):&\qquad(\frac{[r_1-r_{k_1}]_+}{r_k}, \frac{[r_2-r_{k_2}]_+}{r_k}, \cdots, \frac{[r_t-r_{k_t}]_+}{r_k})\in \mathcal{R}.\end{align*}
Similarly, 
\begin{align*}(a):&\quad \exists \alpha>0:\quad \alpha\cdot(r_1,r_2,\cdots,r_t, r_k, r_{k_1}, \cdots, r_{k_t})\in \mathcal{R}_{\mathsf{Secure-Linear}},\\
&\Longleftrightarrow\\
(b):&\quad \exists \alpha>0:\quad\alpha\cdot([r_1-r_{k_1}]_+,[r_2-r_{k_2}]_+,\cdots,[r_t-r_{k_t}]_+,r_k, 0, \cdots, 0)\in \mathcal{R}_{\mathsf{Secure-Linear}},
\\ &\Longleftrightarrow\\
(c):&\quad(\frac{[r_1-r_{k_1}]_+}{r_k}, \frac{[r_2-r_{k_2}]_+}{r_k}, \cdots, \frac{[r_t-r_{k_t}]_+}{r_k})\in\mathcal{R}_{\mathsf{Linear}}.\end{align*}

Here, to disambiguate the special case $r_k=0$ showing up in the denominator, we define $c/0$ to be zero if $c=0$, and infinity otherwise.
\end{theorem}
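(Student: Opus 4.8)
The plan is to establish the equivalence of the three statements by proving a cycle of implications $(b)\Rightarrow(a)$, $(a)\Rightarrow(c)$, and $(c)\Rightarrow(b)$, and then to observe that every argument used is constructive with linear codes, so that the same cycle goes through verbatim for the linear region, replacing $\mathcal{R}_{\mathsf{Secure}}$ by $\mathcal{R}_{\mathsf{Secure-Linear}}$ and $\mathcal{R}$ by $\mathcal{R}_{\mathsf{Linear}}$. Throughout I would treat the degenerate case $r_k=0$ separately at the start: if $r_k=0$ then, using perfect secrecy and the convention $c/0\in\{0,\infty\}$, statements $(a)$, $(b)$, $(c)$ all reduce to the assertion that $r_i\le r_{k_i}$ for every $i$ (equivalently $[r_i-r_{k_i}]_+=0$), which is handled by the one-time-pad on private keys alone. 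So assume $r_k>0$ for the main argument, and by rescaling we may assume $r_k$ is whatever is convenient.

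For $(c)\Rightarrow(b)$ I would run the generalized one-time pad of Figure~\ref{fig:Generalized_OTP}: given that $(\frac{[r_1-r_{k_1}]_+}{r_k},\dots,\frac{[r_t-r_{k_t}]_+}{r_k})\in\mathcal{R}$, take a conventional zero-error index code achieving (asymptotically) these rates with block length $l$; XOR its output with a common key of length $l_k\ge l$; and for each user privately XOR $[r_i-r_{k_i}]_+\cdot(\text{scale})$ symbols of $M_i$ with the private key $K_i$. The resulting code is zero-error, and perfect secrecy holds because every transmitted symbol is one-time-padded by fresh independent key material; its rate vector is a positive multiple of $([r_1-r_{k_1}]_+,\dots,[r_t-r_{k_t}]_+,r_k,0,\dots,0)$, giving $(b)$. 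The implication $(b)\Rightarrow(a)$ is immediate from the last Remark before Section~\ref{Main_Results_Sec}: from a code for the tuple in $(b)$ one obtains a code for the tuple in $(a)$ by lowering the message rates (adding back the padded private parts costs only the already-present private keys $r_{k_i}$) — more precisely, $(a)$ follows from $(b)$ together with the monotonicity observation that increasing $r_{k_i}$ and decreasing $r_i$ keeps a rate vector achievable, after possibly one more rescaling by $\alpha$.

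The substantive direction, and the main obstacle, is $(a)\Rightarrow(c)$: the converse showing the one-time pad is order-optimal. Here I would start from any secure index code with rate vector a positive multiple of $(r_1,\dots,r_t,r_k,r_{k_1},\dots,r_{k_t})$ and derive, from the zero-error decoding constraints $H(M_i\mid C,\boldsymbol S_i,K,K_i)=0$ and the perfect secrecy constraint $I(\boldsymbol M;C)=0$, an information-inequality lower bound forcing $\log|\mathcal C|\gtrsim \log|\mathcal K|$ (so $r_k\lesssim 1$) and, crucially, forcing the "effective" message rates $[r_i-r_{k_i}]_+/r_k$ to satisfy exactly the entropy inequalities that characterize membership in the conventional region $\mathcal{R}$. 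The key manipulation is: condition on $(K,K_i)$ to peel off the private key from user $i$'s decoding; use perfect secrecy $I(\boldsymbol M;C)=0$ to argue that $C$ alone reveals nothing, so the common key $K$ must "absorb" at least the residual index-coding load; and then feed the uniform-message worst case into these inequalities. What makes this delicate is that secure index codes may use private encoder randomness $W$ and the keys may be entangled with $C$ in complicated ways, so one must be careful that the extracted sub-code for the residual messages is genuinely a conventional zero-error index code — this is where I would invoke, if needed, the one-shot-to-asymptotic and zero-error-to-$\epsilon$-error reductions so that the derived rate tuple lands in $\mathcal{R}$ (resp.\ $\mathcal{R}_{\mathsf{Linear}}$, by checking that when the original code is linear every step produces linear sub-codes). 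Finally, the linear statement is obtained by re-examining the three implications: the one-time pad construction in $(c)\Rightarrow(b)$ is linear whenever the underlying conventional code is linear; $(b)\Rightarrow(a)$ preserves linearity trivially; and in $(a)\Rightarrow(c)$ the conditioning/peeling operations applied to a linear code yield linear residual codes over the same field, so membership in $\mathcal{R}_{\mathsf{Linear}}$ follows.
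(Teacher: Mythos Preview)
Your achievability directions are essentially the paper's, modulo some confusion: in $(c)\Rightarrow(b)$ you should not touch private keys at all (the target in $(b)$ has $r_{k_i}=0$; you only XOR the conventional index code with the common key), and in $(b)\Rightarrow(a)$ you are \emph{raising} message rates from $[r_i-r_{k_i}]_+$ to $r_i$, not lowering them---this is a construction (append $r_{k_i}$ extra message bits XORed with $K_i$ and enlarge $C$ accordingly), not mere monotonicity.

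The real gap is in $(a)\Rightarrow(c)$. You propose to derive ``entropy inequalities that characterize membership in $\mathcal R$'' and check them; but no such entropic characterization of the index coding region is known (that is precisely why index coding is hard), so inequality-pushing alone cannot certify a tuple lies in $\mathcal R$. The paper instead \emph{constructs} a conventional code, and it does so in two non-obvious steps that your sketch misses. First, to get from $(a)$ to $(b)$ in the non-linear case it does not condition on $K_i$ (that gives incompatible codes across users); rather it treats the $K_i$'s as encoder-side private randomness, views the resulting system as a $t$-user interference channel with output $\bar Y_i=(\boldsymbol S_i,K,C)$, and uses a treating-interference-as-noise achievability to show rates $\approx H(M_i)-H(K_i)$ are attainable over many repetitions (Lemma~\ref{Lemma_Eliminate_Private_Keys}). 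This produces only an $\epsilon$-error perfectly secure code, so a further reduction (Theorem~\ref{Lemma_Equivalent_epsilon_zero}(b)) is needed to land back in zero-error up to a scale $\alpha$. Second, from a code with only the common key, the paper proves $H(K)\ge I(\boldsymbol M;C\mid K)$ and then uses a covering-lemma argument (Lemma~\ref{Lemma_Rel_IC_SIC}) to \emph{build} a conventional index code of public-message length $\approx I(\boldsymbol M;C\mid K)\le H(K)$, which places the residual rates $H(M_i)/H(K)$ in $\mathcal R$. Finally, your claim that the linear statement follows ``verbatim'' is incorrect: the interference-channel and covering-lemma steps are inherently non-linear and destroy any field structure. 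The paper handles the linear case by an entirely separate one-shot argument (Theorem~\ref{Main_theorem_Linear}) that row-reduces the code matrix to show each key symbol masks exactly one equation, and then uses a linear-algebra exchange lemma to strip out the private-key rows.
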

\begin{corollary}
\label{Theorem_Private_Key}\emph{
In the case that only private keys $K_i, i\in[t]$ are available, \emph{i.e.,} $r_k=0$, perfect secrecy is possible if and only if
\begin{equation*}
r_{k_i}\geq r_i, i\in[t].
\end{equation*}
This is because if $r_{k_i}<r_i$ for some $i$, then $[r_i-r_{k_i}]_+/{r_k}$ will be infinity. This is a contradiction since the rates in index coding are at most one. \\
Clearly, $r_{k_i}\geq r_i$ implies that we can do separate one-time pad on individual messages. With this strategy, the length of public communication $l$ will be equal to $\sum_{i=1}^t l_{k_i}$. It turns out that we cannot achieve zero-error perfect security with $l<\sum_{i=1}^t l_{k_i}$ in this case.}
\end{corollary}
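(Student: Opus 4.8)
I would split the corollary into (1) the rate‑region characterization ``$r_k=0\Rightarrow$ perfect secrecy (in the sense of statement $(a)$ of Theorem~\ref{Main_theorem}) is possible iff $r_{k_i}\ge r_i$ for all $i$'', which falls out of Theorem~\ref{Main_theorem}, and (2) the one‑shot bound $l\ge\sum_i l_{k_i}$, which needs a short self‑contained converse plus the separate one‑time pad for achievability.

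For (1) I would specialize the equivalence $(a)\Leftrightarrow(c)$ of Theorem~\ref{Main_theorem} to $r_k=0$. Then the $i$‑th coordinate of the tuple in $(c)$ is $[r_i-r_{k_i}]_+/0$, which by the stated convention is $0$ when $r_i\le r_{k_i}$ and $+\infty$ when $r_i>r_{k_i}$. Since conventional index‑coding rates satisfy $0\le r_i\le 1$ (receiver $i$ recovers $M_i$ from $C$ and $\boldsymbol S_i$, forcing $|\mathcal C|\ge|\mathcal M_i|$), such a tuple can lie in $\mathcal R$ only if no coordinate is $+\infty$, i.e. only if $r_{k_i}\ge r_i$ for every $i$; and when $r_{k_i}\ge r_i$ for every $i$ the tuple is $(0,\dots,0)\in\mathcal R$, so $(c)$ and hence $(a)$ hold. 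The linear statement is verbatim with $\mathcal R$ replaced by $\mathcal R_{\mathsf{Linear}}$.

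The real content is the converse in (2): with no common key, $l\ge\sum_i l_i$ (equivalently $|\mathcal C|\ge\prod_i|\mathcal M_i|$); since $l_{k_i}=l_i$ is simultaneously attainable and, by the per‑receiver Shannon bound $H(K_i)\ge H(M_i)$, the smallest admissible private‑key length, this is the asserted $l\ge\sum_i l_{k_i}$. I would put the uniform product distribution on $\boldsymbol M=(M_1,\dots,M_t)$ (legitimate, since decoding and secrecy must hold for every input distribution) and prove $H(C)\ge\sum_i H(M_i)$. Write $\boldsymbol K=(K_1,\dots,K_t)$, mutually independent, uniform, independent of $\boldsymbol M,W$, let $\boldsymbol M_{\setminus i}=(M_j:j\neq i)$, and define $\Phi(\mathbf A):=H(C\mid\boldsymbol M,K_{\mathbf A})$ for $\mathbf A\subseteq[t]$. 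I would use: (i) $\Phi(\emptyset)=H(C\mid\boldsymbol M)=H(C)$ by perfect secrecy; (ii) for each $i$, $H(C)\ge H(C\mid\boldsymbol M_{\setminus i},K_i)=H(M_i\mid\boldsymbol M_{\setminus i},K_i)+H(C\mid\boldsymbol M,K_i)-H(M_i\mid C,\boldsymbol M_{\setminus i},K_i)=H(M_i)+\Phi(\{i\})$, because the first term is $H(M_i)$ by independence and the last vanishes since $\boldsymbol S_i\subseteq\boldsymbol M_{\setminus i}$ and $H(M_i\mid C,\boldsymbol S_i,K_i)=0$; and (iii) $\Phi$ is submodular, since $\Phi(\mathbf A)+\Phi(\mathbf B)-\Phi(\mathbf A\cup\mathbf B)-\Phi(\mathbf A\cap\mathbf B)=I(C;K_{\mathbf B\setminus\mathbf A}\mid\boldsymbol M,K_{\mathbf A})-I(C;K_{\mathbf B\setminus\mathbf A}\mid\boldsymbol M,K_{\mathbf A\cap\mathbf B})\ge 0$, the inequality because $K_{\mathbf A\setminus\mathbf B}\perp K_{\mathbf B\setminus\mathbf A}$ given $(\boldsymbol M,K_{\mathbf A\cap\mathbf B})$ and conditioning on an independent block cannot decrease mutual information ($U\perp V\mid W\Rightarrow I(C;U\mid V,W)-I(C;U\mid W)=I(U;V\mid C,W)\ge 0$). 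Submodularity gives the standard chain $\sum_{i=1}^t\Phi(\{i\})\ge\Phi([t])+(t-1)\Phi(\emptyset)\ge(t-1)H(C)$, and summing (ii) over $i$ yields $tH(C)\ge\sum_iH(M_i)+(t-1)H(C)$, i.e. $H(C)\ge\sum_iH(M_i)$. Nothing here uses linearity, so the linear case follows too. Achievability is the separate one‑time pad the corollary already names: with $r_{k_i}\ge r_i$, take $|\mathcal K_i|=|\mathcal M_i|$ and $C=(M_1\oplus K_1,\dots,M_t\oplus K_t)$, so receiver $i$ reads off $M_i=C_i\oplus K_i$ (not even using $\boldsymbol S_i$) and $I(\boldsymbol M;C)=0$ as each $K_i$ is a fresh uniform pad, attaining $l=\sum_i l_i=\sum_i l_{k_i}$.

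The step I expect to be the obstacle is that converse $l\ge\sum_i l_i$: with side information a conventional index code can be far shorter than $\sum_i l_i$ (a single XOR for the two‑user cyclic instance), so one must show that imposing perfect secrecy with no common key entirely destroys the benefit of side information, even when the index‑coding graph has cycles. The device that makes the acyclic‑type bound survive the cycles is precisely the submodularity of $\mathbf A\mapsto H(C\mid\boldsymbol M,K_{\mathbf A})$ — which encodes the mutual independence of the private keys — combined with the per‑receiver inequality in (ii); checking (iii), and in particular getting the direction of the conditioning inequality right, is the one place I would take most care.
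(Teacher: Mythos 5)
Your proposal is correct, and the two halves of it relate to the paper differently. For the first half (the rate characterization $r_{k_i}\ge r_i$), you do exactly what the paper does: specialize the equivalence $(a)\Leftrightarrow(c)$ of Theorem~\ref{Main_theorem} to $r_k=0$, invoke the convention for division by zero, and use $r_i\le 1$ in conventional index coding to rule out an infinite coordinate; achievability is the separate one-time pad. For the second half (the claim that $l<\sum_i l_{k_i}$ is impossible), you correctly read the assertion as $H(C)\ge\sum_i H(M_i)$ under minimal keys, and here your route is genuinely different from anything the paper spells out. The paper offers no standalone proof of this claim: in the linear case it is implicit in the structure $l=l_k+\sum_i l_{k_i}$ extracted from the row-echelon argument of Theorem~\ref{Main_theorem_Linear}, and in the general (nonlinear, one-shot, randomized-encoder) case it is simply asserted with ``it turns out.'' Your argument --- the per-receiver identity $H(C)\ge H(M_i)+H(C\mid\boldsymbol M,K_i)$ obtained from zero-error decoding plus perfect secrecy, combined with submodularity of $\mathbf A\mapsto H(C\mid\boldsymbol M,K_{\mathbf A})$ (which holds exactly because the private keys are mutually independent given $\boldsymbol M$) and the telescoping bound $\sum_i\Phi(\{i\})\ge(t-1)\Phi(\emptyset)$ --- is a clean, purely entropic converse that applies uniformly to nonlinear codes with private randomness, and it correctly explains why side information buys nothing once a common key is absent. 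I verified the chain-rule identity in step (ii), the sign of the submodularity inequality in step (iii), and the final bookkeeping $tH(C)\ge\sum_iH(M_i)+(t-1)H(C)$; all are sound. The only caveat worth recording is the one you already flag: the literal inequality $l\ge\sum_i l_{k_i}$ fails for wastefully large keys, so the statement must be (and is, in your write-up) interpreted as $l\ge\sum_i l_i$ together with the necessity of $l_{k_i}\ge l_i$.
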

{
\begin{remark}
The Shannon cipher system is a special case of the secure index coding problem. In the Shannon cipher system, where we have one legitimate receiver, perfect secrecy condition necessitates $r/r_k\leq 1$, where $r$ is the message rate and $r_k$ is the key rate. Similarly, if we consider no private keys, the third statement of the above-mentioned theorem implies that $r_i/r_k\leq 1, i\in[t]$ which is an extension of the Shannon perfect secrecy condition to multiple receivers. 
\end{remark}
\begin{remark}
Consider the first and third parts of the theorem. The factor $\alpha$ in the statement $(a)$ specifies the cone of the secure rate region (if $\alpha$ multiplied by the rate vector is in the $\mathcal{R}_{\mathsf{Secure}}$, the rate vector itself belongs to the cone of this region when viewed from the origin). Hence, as shown in the Fig. \ref{fig:Cone_secure_region}, the theorem intuitively states that the conventional index coding problem rate region determines the cone of the secure rate region. Moreover, the introduced generalized one-time pad strategy gives an achievable rate region which is a subset of $\mathcal{R}_{\mathsf{Secure}}$ and has a cone being the same as that of the secure rate region.
\end{remark}

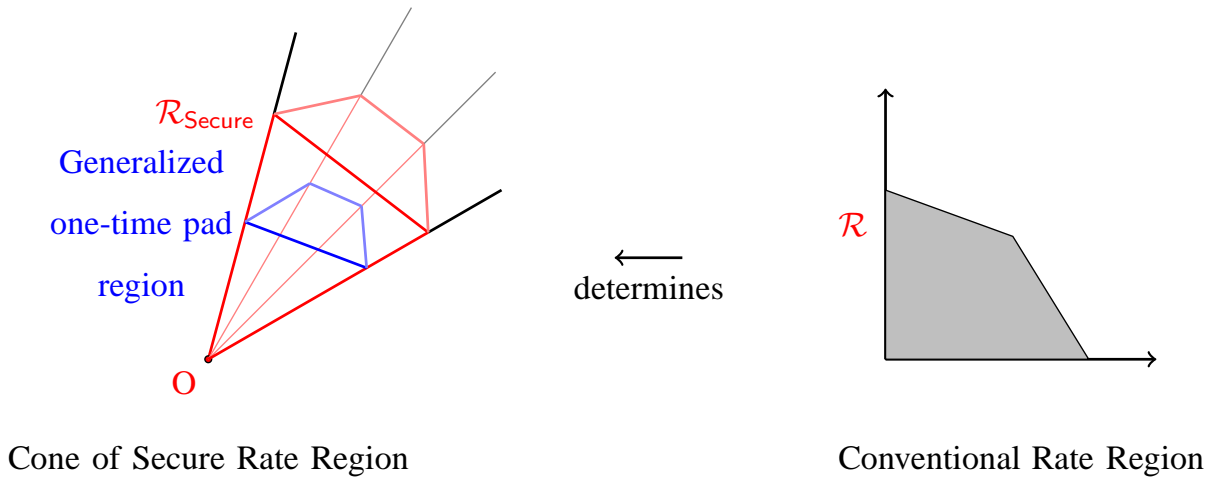
\begin{figure}[H]
\centering
\resizebox {\columnwidth} {!} {
\begin{tikzpicture}[scale=.7]

\draw[thick,red] (0,0) -- (75:3.75) node [red,inner sep=5pt, pos=1, anchor=east] {$\mathcal{R}_{\mathsf{Secure}}$};

\draw[fill=red] (0,0) circle (.05) node [red,below left] {O};
\node at (0,-1.5) [align=center] {Cone of Secure Rate Region};
\draw[red!50] (0,0) -- (60:4.5);
\draw[red!50] (45:0) -- (45:4.5);
\draw[thick,red] (0,0) -- (30:3.75);

\draw[gray] (60:4.5) -- (60:6);
\draw[thick] (75:3.75) -- (75:5);
\draw[gray] (45:4.5) -- (45:6);
\draw[thick] (30:3.75) -- (30:5);

\draw[thick,blue] (30:2.7) -- (75:2.1) node [blue,inner sep=7pt, pos=.9, anchor=east,align=center] {Generalized\\one-time pad \\region};
\draw[thick,blue!50] (75:2.1)--(60:3);
\draw[thick,blue!50] (60:3)--(45:3.2);
\draw[thick,blue!50] (45:3.2)--(30:2.7);

%\draw[dashed] (30:5) -- (75:5)--(60:6)--(45:6)--(30:5);
\draw[thick,red] (30:3.75) --(75:3.75);
\draw[thick,red!50] (75:3.75)--(60:4.5);
\draw[thick,red!50] (60:4.5)--(45:4.5);
\draw[thick,red!50] (45:4.5)--(30:3.75);

\draw[->,thick] (10,0)--(10,4) node [red,inner sep=5pt, pos=.5, anchor=east] {$\mathcal{R}$}; 
\draw[->,thick] (10,0)--(14,0);
\node at (12,-1.5) [align=center] {Conventional Rate Region};
\draw[fill=gray!50] (10,2.5)--+(-20:2)--(13,0)--(10,0)--cycle; 

\draw[->,thick] (7,1.5)--(6,1.5) node [inner sep=5pt, pos=.5, anchor=north] {determines};

\end{tikzpicture}
}
\caption{Conventional index coding region determines the cone of the secure rate region. The generalized one-time pad strategy obtains the entire boundary of the cone.} 
\label{fig:Cone_secure_region}
\end{figure}
}

Theorem \ref{Main_theorem_Linear} presents a similar statement to the Theorem \ref{Main_theorem} for the linear case.

\begin{theorem}
\label{Main_theorem_Linear} Suppose we are given message alphabet sets $\mathcal{M}_1,\mathcal{M}_2,\cdots,\mathcal{M}_t$ where $\mathcal{M}_i=\mathbb{F}^{l_i}$ for some finite field $\mathbb{F}$. Then, there exists a linear zero-error perfectly secure index code with key lengths $(l_k, l_{k_1}, \cdots, l_{k_t})$ and code length $l$, if and only if there exists a linear zero-error conventional index code (no secrecy) with code length $l_k$ for message sets $\widetilde{\mathcal{M}}_1,\widetilde{\mathcal{M}}_2,\cdots,\widetilde{\mathcal{M}}_t$ where $\widetilde{\mathcal{M}}_i=\mathbb{F}^{[l_i-l_{k_i}]_{+}}$ in which $[a]_{+}$ is $a$ if it is non-negative, and is zero otherwise.
\end{theorem}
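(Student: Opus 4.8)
The plan is to prove the two directions separately: achievability by realizing the generalized one‑time pad of Fig.~\ref{fig:Generalized_OTP}, and the converse by a linear‑algebraic reduction of an arbitrary linear secure code to a conventional one. Throughout, for the converse I write the code as in \eqref{eqn5}, $C = AK + \sum_j B_j K_j + \Psi W + \sum_j \Gamma_j M_j$, and let $N_i=\{j:\,M_j\in\boldsymbol S_i\}$.

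\emph{Achievability ($\Leftarrow$).} Given a linear conventional index code of length $l_k$ for the reduced message sets $\widetilde{\mathcal M}_i=\mathbb F^{[l_i-l_{k_i}]_{+}}$, I would split $M_i$ as $(M_i',\widetilde M_i)$ with $M_i'$ of length $\min(l_i,l_{k_i})$, transmit the block $M_i'+K_i$ (using $\min(l_i,l_{k_i})$ symbols of $K_i$), feed the $\widetilde M_i$ into the given conventional code, and transmit its $l_k$ output symbols added to the common key $K$; the resulting code has length $l=\sum_i\min(l_i,l_{k_i})+l_k$ (larger $l$ by padding). The encoder and induced decoders are linear; perfect secrecy is immediate since every transmitted block is a one‑time pad with an independent uniform key, so $C$ is uniform on its range for every value of $\boldsymbol M$, hence $I(\boldsymbol M;C)=0$; and zero‑error recovery holds because $\widetilde M_j$ is a sub‑vector of $M_j$, so receiver $i$'s side information carries over to the reduced problem and the conventional decoder applies once $K$ and $K_i$ are stripped off.

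\emph{Converse ($\Rightarrow$): setup and normal form.} First I would translate the operational constraints: with the keys uniform and independent, perfect secrecy is equivalent to $\mathrm{colspace}(\Gamma_j)\subseteq V:=\mathrm{colspace}([A\,|\,B_1\,|\,\cdots\,|\,B_t\,|\,\Psi])$ for all $j$, and zero‑error decoding at receiver $i$ is equivalent to $\Gamma_i$ injective together with $\mathrm{colspace}(\Gamma_i)\cap\mathrm{colspace}([B_j\,(j\neq i)\,|\,\Psi\,|\,\Gamma_j\,(j\notin N_i\cup\{i\})])=\{0\}$ (receiver $i$ cancels $AK$, $B_iK_i$ and its side‑information messages, and what survives spans exactly that subspace). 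Several reductions are without loss of generality: $\Psi$ may be taken of full column rank; every column of $[A\,|\,B_1\,|\,\cdots\,|\,B_t\,|\,\Psi]$ may be taken linearly independent of the others, because dropping a redundant column only shrinks a key, which can only strengthen the target (a smaller common key weakens ``length $l_k$'' to a shorter length, which still implies length $l_k$ by padding the code; a smaller $l_{k_i}$ enlarges the reduced message, and a conventional code for a larger message yields one for a smaller by zero‑padding). After these reductions, choosing coordinates dual to this basis of $V$ puts the code into the normal form $\widetilde C_0=K+\sum_j\Gamma_j^{(A)}M_j$, $\widetilde C_k=K_k+\sum_j\Gamma_j^{(k)}M_j$, and one notes that the only two blocks a receiver can use are its own $\widetilde C_0$ and $\widetilde C_i$, all other $\widetilde C_k$ being independent uniform noise to it.

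\emph{Converse: extracting the conventional code — the hard part.} The natural candidate conventional code is $D=\sum_j\Gamma_j^{(A)}M_j$ (the common‑key block with $K$ removed), of length $l_k$. From receiver $i$ decoding $M_i$ out of $(\widetilde C_0,\widetilde C_i)$ one gets $\ker\Gamma_i^{(A)}\cap\ker\Gamma_i^{(i)}=\{0\}$, hence $\dim\ker\Gamma_i^{(A)}\le\mathrm{rank}\,\Gamma_i^{(i)}\le l_{k_i}$, so $D$ factors through reduced messages of dimension $\ge l_i-l_{k_i}$, i.e.\ $\ge[l_i-l_{k_i}]_{+}$. The delicate point — which I expect to be the main obstacle — is zero‑error decodability of $D$: a priori $K_i$ may be used in an ``entangled'' way rather than as a clean one‑time pad on a sub‑vector of $M_i$, so $D$ with these reduced messages need not be decodable (concretely, $\widetilde C_0=K+M_1+M_2$, $\widetilde C_1=K_1+M_1$ over $\mathbb F_2$ with $N_1=\varnothing$, $N_2=\{1\}$ is a valid secure code for which this naive $D$ is not decodable). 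I would resolve this by a normalization lemma: any linear secure code can be converted, using invertible relabelings together with encoder‑side subtractions of quantities that each relevant receiver can itself recompute — its decoded private‑pad symbols, or its side‑information messages — into generalized one‑time‑pad form, i.e.\ $\widetilde C_i=K_i+(\text{a sub‑vector of }M_i)$ and $\widetilde C_0=K+(\text{a conventional code on the remaining parts})$, without disturbing perfect secrecy or zero‑error recovery; the conventional code of length $l_k$ is then read off directly. Equivalently one can induct on $\sum_i l_{k_i}$, peeling off one coordinate of some $K_{i_0}$ together with one coordinate of $M_{i_0}$ at a time and checking each peeling yields a valid secure code with the reduced parameters, until no private keys remain — at which point a keyless linear secure code is exactly a conventional code added to $K$, and the statement is transparent. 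Establishing either the normalization lemma or the inductive step — which amounts to controlling how the private‑key columns sit relative to the $\Gamma_j$'s by using the secrecy constraint $\mathrm{colspace}(\Gamma_j)\subseteq V$ and the per‑receiver decodability constraints simultaneously — is where the real work lies.
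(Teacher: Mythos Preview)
Your achievability direction and your reduction to the normal form $\widetilde C_0=K+\sum_j\Gamma_j^{(A)}M_j$, $\widetilde C_k=K_k+\sum_j\Gamma_j^{(k)}M_j$ are essentially the paper's argument (row echelon form under a minimality assumption, then observing that every leading entry must be a key or private-randomness symbol by secrecy, and that the $W$-rows are useless to everyone and may be dropped). Your counterexample showing that the naive $D=\sum_j\Gamma_j^{(A)}M_j$ can fail is apt, and you are right that this is where the work lies.

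The gap is that you do not actually close this step. Your normalization lemma, as stated, aims for the strong conclusion $\widetilde C_i=K_i+(\text{a sub-vector of }M_i)$; this is more than the paper proves and is delicate to realize through the operations you allow (try it on $\widetilde C_0=K+M_1+M_2$, $\widetilde C_1=K_1+M_2$ with $N_1=\varnothing$, $N_2=\{1\}$: getting $\widetilde C_1$ into the desired form and simultaneously making $\widetilde C_0$ a conventional code on the residual messages requires modifying $\widetilde C_0$ by $M_1$, which not every receiver can recompute without circularity). Your inductive ``peeling'' is on the right track, but the choice of which coordinate of $M_{i_0}$ to peel with a given coordinate of $K_{i_0}$ is exactly the nontrivial point.

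The paper bypasses normalization altogether with a short linear-algebra lemma (Lemma~\ref{lemma2}): if $X$ is recoverable from two blocks of linear equations $AX+BY$ (the $l_k$ public equations) and $CX+DY$ (the $l_{k_i}$ private equations for receiver $i$), then there is a set $\mathcal S\subseteq[l_i]$ with $|\mathcal S|\le l_{k_i}$ such that $X$ is already recoverable from $AX+BY$ together with the entries $X(\mathcal S)$. The proof is a one-pass exchange argument: scan the private rows; if a row is needed to decode some $X(p)$, then revealing $X(p)$ lets you reconstruct that row, so put $p$ into $\mathcal S$ and discard the row. Applying this per receiver, one fixes at most $l_{k_i}$ coordinates of each $M_i$ (revealing them publicly can only help the other receivers), after which the $l_k$ public equations alone form a conventional linear index code for messages of residual sizes $\ge[l_i-l_{k_i}]_+$. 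This is precisely the missing ingredient in your proposal; your induction, once made precise, would amount to iterating this exchange step.
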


%%%%%%%%%%%%%%%%%%%%%%%%%%%%
%%%%%%%%%%%%%%%%%%%%%%%%%%%%
\subsection{Variations on security and reliability constraints}
Our proof of Theorem \ref{Main_theorem} requires us to study the perfectly secure achievable rates under an asymptotically vanishing error criterion (rather than the exactly zero-error criterion). For this, we develop a result that can be understood as a perfectly secure version of the result of \cite{effelr12} on the equivalence of asymptotically zero and exactly zero network coding rates. Below, we provide a more general result than the one needed in the proof of  Theorem \ref{Main_theorem} by comparing achievable rates of weakly secure codes with an asymptotically vanishing error, with those of perfectly secure zero-error codes. To proceed, let us define two other secrecy conditions, in addition to the perfect secrecy constraint mentioned in part 3 of Definition \ref{Def_Secure_IC}.

\begin{mydef}[Strong Secrecy and Vanishing Error Probability] A rate vector \begin{align}\boldsymbol{r}=(r_1,r_2,\cdots,r_t, r_k, r_{k_1}, \cdots, r_{k_t})\end{align} is strongly secure achievable with a vanishing probability of error if for any $\epsilon>0$, there is a code whose rate vectors is in the $\epsilon$ distance of $\boldsymbol{r}$, and furthermore, assuming a uniform and independent distribution over the messages in $\boldsymbol{M}$, the error probability of the code is less than or equal to $\epsilon$ and 
\begin{equation*}
\|p_{\boldsymbol{M},C}-p_{\boldsymbol{M}}p_{C}\|_1\leq\epsilon,
\end{equation*}
where $\|.\|_1$ is the total variation distance which is defined as the half of the $\ell_1$ distance between two pmfs.
\end{mydef}

\begin{mydef}[Weak Secrecy and Vanishing Error Probability]
 A rate vector \begin{align}\boldsymbol{r}=(r_1,r_2,\cdots,r_t, r_k, r_{k_1}, \cdots, r_{k_t})\end{align} is weakly secure achievable with a vanishing probability of error if for any $\epsilon>0$, there is a code whose rate vectors is in the $\epsilon$ distance of $\boldsymbol{r}$. Furthermore, assuming a uniform and independent distribution over the messages in $\boldsymbol{M}$, the error probability of the code is less than or equal to $\epsilon$ and 
\begin{equation*}
I(\boldsymbol{M};C)\leq\epsilon\cdot H(\boldsymbol{M}).
\end{equation*}
\end{mydef}

It follows from the above definitions that perfect secrecy conditions is stronger than strong secrecy condition, which in turn is stronger than weak secrecy constraint.

\begin{theorem}
\label{Lemma_Equivalent_epsilon_zero}
Assume that $(r_1,r_2,\cdots,r_t, r_k, r_{k_1},r_{k_2},\cdots,r_{k_t})$ is achievable by a sequence of weakly secure codes whose probabilities of error  converge to zero \emph{asymptotically}. We also allow the transmitter to use private randomization in these codes. Then,
\begin{enumerate}
\item [(a)] $(r_1,r_2,\cdots,r_t, r_k, r_{k_1},r_{k_2},\cdots,r_{k_t})$ is achievable by a sequence of perfectly secure and $\epsilon$-error codes.
\item[(b)] There is some $\alpha>0$ such that $\alpha\cdot (r_1,r_2,\cdots,r_t, r_k, r_{k_1},r_{k_2},\cdots,r_{k_t})$ is achievable by a sequence of perfectly secure and zero-error codes, without using private randomization at the transmitter.
\end{enumerate}

\end{theorem}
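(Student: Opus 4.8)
The plan is to prove (a) by directly repairing a weakly secure code into a perfectly secure one at a vanishing key cost, and then to deduce (b) from (a) by a round trip through the conventional index coding problem that invokes the zero-error$/\epsilon$-error equivalence of~\cite{laneff11,effelr12} together with the deterministic generalized one-time pad of Section~\ref{Main_Results_Sec}.

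\textbf{Part (a).} Fix $\epsilon>0$ and pick from the hypothesized sequence a weakly secure $\epsilon$-error code with rate vector within $\epsilon$ of $\boldsymbol r$, so $I(\boldsymbol M;C)\le\epsilon\,H(\boldsymbol M)$; since the messages are uniform, $H(\boldsymbol M)=\sum_i\log|\mathcal M_i|=O(\log|\mathcal C|)$, and hence the absolute leakage $\delta:=I(\boldsymbol M;C)$ is an $O(\epsilon)$ fraction of $\log|\mathcal C|$. The idea is to spend an extra common key of rate $O(\epsilon)$ to drive the leakage to \emph{exactly} zero while keeping the decoding error at $\epsilon+o(1)$. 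Concretely I would pass to $N$ i.i.d.\ repetitions, so that the eavesdropper sees $C^{N}$ with $I(\boldsymbol M^{N};C^{N})=N\delta$ and, by the AEP, each conditional law $p_{C^{N}\mid\boldsymbol M^{N}=\boldsymbol m^{N}}$ is essentially uniform on a set of size $\approx 2^{N(H(C)-\delta)}$ inside a common set of size $\approx 2^{NH(C)}$; the encoder then re-randomises $C^{N}$ by a one-time-pad-type scramble keyed by a fresh common key $K'$ of rate $\approx\delta/\log|\mathcal C|=O(\epsilon)$, chosen combinatorially so that all of these conditional laws are mapped onto a single law, which makes the eavesdropper's view independent of $\boldsymbol M^{N}$; the legitimate receivers, knowing $K'$ and all other keys, undo the scramble and decode as before. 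Letting $N\to\infty$ and then $\epsilon\to0$ yields a sequence of perfectly secure, $\epsilon$-error codes with rate vectors tending to $\boldsymbol r$.

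\textbf{Part (b).} Start from the perfectly secure, $\epsilon$-error codes of (a). Using the perfect-secrecy identity that $p(C=c\mid\boldsymbol M=\boldsymbol m)$ does not depend on $\boldsymbol m$, together with the decoders $g_{i}$, run the index-coding form of Shannon's cipher inequality $H(K)\ge H(M)$ — simultaneously for all receivers, and keeping the private keys separate — to \emph{extract} an honest conventional ($\epsilon$-error) index code delivering to each receiver $i$ the part of $M_{i}$ that is not one-time-padded by $K_{i}$ (the $[l_{i}-l_{k_{i}}]_{+}$ ``remaining'' coordinates), with a transmission of length $\approx r_{k}\log|\mathcal C|$; this is the reduction behind the tuple $\bigl(\frac{[r_{1}-r_{k_{1}}]_{+}}{r_{k}},\dots,\frac{[r_{t}-r_{k_{t}}]_{+}}{r_{k}}\bigr)$ of Theorem~\ref{Main_theorem} and its linear analogue Theorem~\ref{Main_theorem_Linear}. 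Then apply~\cite{laneff11,effelr12} to replace this $\epsilon$-error conventional code by a conventional \emph{zero-error} index code of asymptotically the same rate, which can be taken deterministic. Finally feed the latter into the generalized one-time pad of Section~\ref{Main_Results_Sec} (XOR the private keys onto the padded coordinates and the common key onto the conventional codeword): the result is a deterministic — hence private-randomness-free — zero-error perfectly secure index code whose rate vector, after accounting for the two renormalisations (dividing by $r_{k}$ in the extraction, multiplying back in the re-padding), is a fixed positive multiple $\alpha\boldsymbol r$ of the original. The degenerate case $r_{k}=0$ is handled separately: a Fano/Shannon estimate then forces $r_{k_{i}}\ge r_{i}$ for all $i$, and the per-message one-time pad — already deterministic and zero-error — achieves a positive multiple of $\boldsymbol r$, as in Corollary~\ref{Theorem_Private_Key}.

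\textbf{Where the difficulty lies.} The crux is the leakage-repair lemma in part~(a): weak secrecy bounds only the \emph{fractional} leakage and does not even imply strong (total-variation) secrecy, so the conditional output laws cannot be coupled to a reference law symbol-by-symbol; passing to i.i.d.\ blocks and invoking the AEP is what makes the residual correlation flat enough to be one-time-padded away by a vanishing-rate fresh common key, and arranging for the leakage to vanish \emph{exactly} — rather than merely asymptotically — while preserving the legitimate decoders is the delicate combinatorial point. A secondary obstacle is the extraction in part~(b): one must produce an \emph{actual} conventional index code of block length $\approx r_{k}\log|\mathcal C|$ rather than merely an entropy (LP-type) outer bound with that normalisation, and it is exactly the perfect-secrecy identity that guarantees such a short code exists.
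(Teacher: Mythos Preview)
Your plan for part~(b) is broadly aligned with the paper's: reduce to a conventional index code, invoke the zero-error$/\epsilon$-error equivalence of~\cite{laneff11}, and re-encrypt via the generalized one-time pad. The paper carries this out by first eliminating the private keys (Lemma~\ref{Lemma_Eliminate_Private_Keys}, which treats $K_i$ as noise in a $t$-user interference channel and applies Fano) and then, with only a common key present, using a covering-lemma argument (Lemma~\ref{Lemma_Rel_IC_SIC}) to build an actual conventional code of block length $\approx I(\boldsymbol M;C\mid K)\le H(K)$. Your ``run Shannon's cipher inequality to extract an honest conventional index code'' is the right intuition but not yet a construction; the covering lemma is precisely the missing mechanism that turns the entropic bound into a code, and the interference-channel reduction is how the paper handles the private keys rather than trying to peel off ``one-time-padded coordinates'' directly.

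The real gap is in part~(a). Your scheme---take $N$ i.i.d.\ copies, note that each $p_{C^N\mid\boldsymbol M^N=\boldsymbol m^N}$ is AEP-close to uniform on a set of size $2^{N(H(C)-\delta)}$, and scramble with a fresh common key of rate $\approx\delta/\log|\mathcal C|$---cannot deliver \emph{perfect} secrecy. Typicality gives only approximate flatness of the conditional laws and no algebraic structure on their supports, so a key of size $2^{N\delta}$ cannot map all of them onto a single law; at best you would obtain strong (total-variation) secrecy, which is exactly what you flag as ``the delicate combinatorial point'' but do not resolve. The paper's route is different in both steps. For weak $\to$ strong it does \emph{not} add key; instead it shrinks the message rate by $O(\epsilon)$ via random binning of $\mathcal M_i^n$ (output-statistics-of-random-binning, Theorem~1 of~\cite{yaar14}), with a second small bin broadcast publicly so that Slepian--Wolf recovers $M_i^n$ from $\hat M_i^n$. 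For strong $\to$ perfect it uses a coupling: since $\|p_{\boldsymbol M,C}-p_{\boldsymbol M}p_C\|_1\le\epsilon$, there is a joint law under which $(\boldsymbol M',C')\sim p_{\boldsymbol M}\!\otimes\! p_C$ agrees with $(\boldsymbol M,C)$ with probability $\ge 1-\epsilon$; the encoder treats $\boldsymbol M'$ as the message, draws $C'$ from $p_{C'\mid\boldsymbol M'}$, and transmits $C'$, which is \emph{exactly} independent of $\boldsymbol M'$ at the price of an extra $\epsilon$ in error probability. This coupling trick is the idea your proposal is missing.
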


To prove the Theorems \ref{Main_theorem} and \ref{Lemma_Equivalent_epsilon_zero}, the following lemmas are needed.

\begin{lemma}
\label{Lemma_Eliminate_Private_Keys}
If there exists an $\epsilon$-error perfectly secure code $C$ with the rate vector $$(r_1,r_2,\cdots,r_t, r_k, r_{k_1}, \cdots, r_{k_t}),$$ then $$([r_1-r_{k_1}]_+,[r_2-r_{k_2}]_+,\cdots,[r_t-r_{k_t}]_+, r_k, 0, \cdots, 0)$$ is also $\epsilon$-error perfectly secure achievable.
\end{lemma}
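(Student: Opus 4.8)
The plan is to take an $\epsilon$-error perfectly secure code $C$ realizing the rate vector $(r_1,\dots,r_t,r_k,r_{k_1},\dots,r_{k_t})$ and modify it so that the private keys are absorbed into the messages, leaving a code whose private-key rates are all zero and whose effective message rates are $[r_i-r_{k_i}]_+$. First I would fix a large block length $n$ and work with the $n$-fold version of the code, so that all the rate quantities are realized (up to vanishing additive terms) by genuine alphabet sizes $|\mathcal M_i^n|$, $|\mathcal K_i^n|$, etc.; this is the standard move that lets me treat $r_i$, $r_{k_i}$ as essentially exact.

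The key construction is as follows. Consider user $i$. If $r_{k_i}\ge r_i$, then the private key $K_i$ is at least as long as $M_i$, so the encoder can simply XOR (or, over a general alphabet, group-add) a fresh copy of $M_i$ into $K_i$ before running the original encoder; since $K_i$ is uniform and independent of everything, the one-time-pad argument shows $K_i+M_i$ is again uniform and independent, so replacing $K_i$ by $K_i+M_i$ in the original code changes nothing observable — but now receiver $i$ can recover $M_i$ just from its (modified) private key, so $M_i$ no longer needs to be ``in'' the code at all, giving effective message rate $[r_i-r_{k_i}]_+ = 0$. If $r_{k_i} < r_i$, split $M_i = (M_i', M_i'')$ with $M_i''$ of rate $r_{k_i}$ and $M_i'$ of rate $r_i - r_{k_i}$; add $M_i''$ into $K_i$ as before. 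In the new code the part $M_i''$ is delivered privately and silently, while $M_i'$ is delivered exactly as in the old code. Doing this simultaneously for every $i$ yields a code with private-key rates all $0$, common-key rate still $r_k$, code length unchanged, message rates $[r_i-r_{k_i}]_+$, and error probability still $\le\epsilon$ (the decoders for the $M_i'$ parts are unchanged; the $M_i''$ parts are decoded from the keys with zero error). The secrecy claim is that $I(M_1',\dots,M_t';C') = 0$ for the new code $C'$: since the new code is the old code run on inputs where $M_i \mapsto (M_i', M_i'')$ with $M_i''$ independent uniform (it is a shifted copy of the independent key and hence independent of $M_i'$), and the old code satisfied $I(\boldsymbol M; C)=0$ for every input distribution on $\boldsymbol M$, perfect secrecy of $C$ with respect to the full $\boldsymbol M$ implies perfect secrecy of $C'$ with respect to the sub-collection $(M_i')_i$.

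The main obstacle — and the step that needs care rather than cleverness — is the bookkeeping around perfect secrecy being required ``for all input distributions.'' I must check that after the re-randomization the joint distribution of $(\text{messages we still want to hide}, \text{keys}, \text{private randomness})$ is still of the product form demanded in part 3 of Definition~\ref{Def_Secure_IC}, i.e. that $M_i''$ (the shifted key piece) really is independent of $M_i'$ and of all other messages and keys, so that I am entitled to invoke the ``$\forall p_{\boldsymbol M}$'' secrecy of the original code at the particular product input distribution that my construction induces. A secondary, purely technical point is handling the floor/ceiling mismatches when $r_i - r_{k_i}$ is not a ``clean'' fraction of the block length; this is absorbed by passing to long blocks and noting the discrepancies are $o(n)$ and hence do not affect the limiting rates. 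Everything else — error probability, the common key and code length being untouched — is immediate from the fact that outside of the key-shifting, the encoder and decoders are literally the same maps as before.
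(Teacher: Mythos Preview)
Your construction has a genuine gap: it does not actually eliminate the private keys from the \emph{decoder's} side. You XOR $M_i''$ into $K_i$ at the encoder and correctly observe that $K_i + M_i''$ is again uniform and independent, so the encoder can legitimately run the original $f$ with $K_i+M_i''$ in place of $K_i$. But the original decoder $g_i$ takes $K_i$ (or, in your modified run, $K_i+M_i''$) as an argument; a receiver that has neither cannot apply $g_i$. Your sentence ``receiver $i$ can recover $M_i$ just from its (modified) private key'' and later ``the decoders for the $M_i'$ parts are unchanged'' both presuppose that receiver $i$ still holds a private key of rate $r_{k_i}$ --- precisely what the target code is forbidden to have. (Concretely: if the original code were $C=M_i+K_i$ with $g_i(C,K_i)=C-K_i$, then after your substitution the encoder outputs $C=M_i+(K_i+M_i)=K_i+2M_i$, and a receiver with no private key cannot recover $M_i$.) There is a second, related problem with side information: once $M_j''$ is demoted to encoder-side randomness, any receiver that previously had $M_j=(M_j',M_j'')$ in its side-information set now only has $M_j'$, so the ``unchanged'' decoders $g_i$ can no longer be fed their full $\boldsymbol{S}_i$.

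The paper's proof gets around exactly this obstacle by a different mechanism. It \emph{does} strip $K_1,\dots,K_t$ from the receivers and treat them as private randomness of the transmitter, and then confronts the fact that receiver $i$, seeing only $\bar Y_i=(\boldsymbol S_i,K,C)$, can no longer decode $M_i$ with small error. The fix is to view $M_i\mapsto \bar Y_i$ as a $t$-user interference channel and to code over many i.i.d.\ uses of it (treating interference as noise), which is possible at any rate below $I(M_i;\bar Y_i)$. The chain
\[
I(M_i;\bar Y_i)=I(M_i;\bar Y_i,K_i)-I(M_i;K_i\mid \bar Y_i)\ge H(M_i)-h(\epsilon)-\epsilon\log|\mathcal M_i|-H(K_i)
\]
(Fano for the first term, $H(K_i)$ bounding the loss from dropping $K_i$) then delivers message rate $\approx r_i-r_{k_i}$. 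Perfect secrecy is inherited because the eavesdropper still sees only blocks of $C$, and the original code was perfectly secret for \emph{every} message distribution. In short, the step you treated as trivial --- decoding $M_i'$ without $K_i$ --- is where the real work lies, and it requires an asymptotic channel-coding argument rather than a one-shot relabeling.
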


\begin{lemma}
\label{Lemma_Rel_IC_SIC}
Suppose that there is an $\epsilon$-error perfectly secure code $C$ constructed from common key $K$ and messages $M_i$ for $i\in[t]$ where $M_i$ and $K$ are mutually independent uniformly distributed random variables. We assume that no private key $K_i$ is used in the code. Then there is a sequence of conventional codes with zero-error probability whose rate vectors converge to
$$\left(\frac{H(M_1)}{I(\boldsymbol{M};C|K)}, \frac{H(M_2)}{I(\boldsymbol{M};C|K)}, \cdots, \frac{H(M_t)}{I(\boldsymbol{M};C|K)}\right).$$
\end{lemma}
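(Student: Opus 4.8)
\emph{Proof plan.}
I would start by recording two identities that make the target rate transparent. Since $\boldsymbol M$ and $K$ are independent, $I(\boldsymbol M;C\mid K)=I(\boldsymbol M;C,K)$, and since $I(\boldsymbol M;C)=0$ by perfect secrecy this also equals $I(\boldsymbol M;K\mid C)\le H(K)$; moreover $I(\boldsymbol M;C\mid K)=H(C\mid K)-H(C\mid\boldsymbol M,K)\le H(C\mid K)$, with equality when the encoder is deterministic. The reading this suggests is that, once a legitimate party knows the common key, $C$ is just an ordinary index code, and the number of ``effectively distinct'' public symbols it emits, measured conditionally on $K$, is $H(C\mid K)=I(\boldsymbol M;C\mid K)$ (the deterministic-versus-randomized caveat I would dispose of by noting the encoder's private randomness may be fixed). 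The lemma is the assertion that this conditional public load can be realized by a genuinely keyless code.

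The construction works with $n$ i.i.d.\ copies of the problem. Run the given secure code on each copy with fresh independent keys, obtaining $(\boldsymbol M^{n},K^{n},C^{n})$. By an averaging argument, pick a deterministic realization $k^{n}=(k^{(1)},\dots,k^{(n)})$ of the key string that is simultaneously (i) \emph{empirically typical}, so that a source code for $C^{n}$ built with $k^{n}$ hard-wired into the code has length $\approx\sum_j H(C\mid K=k^{(j)})\approx n\,H(C\mid K)=n\,I(\boldsymbol M;C\mid K)$, and (ii) \emph{reliable on average}, in that the mean over copies of the per-copy probability of a decoding failure is at most $\epsilon+o(1)$. Freeze this $k^{n}$ as part of the (keyless) conventional code: the transmitter, on input $\boldsymbol M^{n}$, forms $C^{n}=f^{n}(\boldsymbol M^{n},k^{n})$ and broadcasts the resulting source-code index; each legitimate receiver decompresses $C^{n}$ and then runs the secure decoders $g_i(\,\cdot\,,\,\cdot\,,k^{(j)})$ copy by copy, recovering $M_i^{n}$ on all but an $\epsilon$-fraction of copies, while the public rate is, by (i), already at the target value.

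It remains to remove the residual errors without spoiling the rate, and here I would use the equivalence, established in \cite{laneff11} and quoted above, between the asymptotically-zero-error and exactly-zero-error rate regions of conventional index coding. Concretely, I would first convert the above into a sequence of conventional codes with \emph{vanishing} error and limiting rate vector $\big(\tfrac{H(M_1)}{I(\boldsymbol M;C\mid K)},\dots,\tfrac{H(M_t)}{I(\boldsymbol M;C\mid K)}\big)$ --- excising the offending coordinates by an expurgation/combinatorial step in the Langberg--Effros spirit, at asymptotically negligible cost --- and then invoke \cite{laneff11} to pass to a sequence of exactly zero-error conventional codes with the same limiting rate vector, which is what the lemma claims.

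The step I expect to be the real obstacle is the joint one: getting the public alphabet down to size $\approx 2^{\,n\,I(\boldsymbol M;C\mid K)}$ \emph{and} achieving exactly zero error. The first demands a \emph{conditional} compression of $C^{n}$ relative to the key --- an unconditional compression only yields rate $H(M_i)/H(C)\le H(M_i)/I(\boldsymbol M;C\mid K)$, which is too small, and transmitting the key outright costs $nH(K)\ge nI(\boldsymbol M;C\mid K)$ with possibly strict inequality --- which is exactly why the key realization must be baked into the code rather than communicated. The second cannot be met by a naive error-correcting overlay, which would add $\Theta(n)$ public symbols and leave the rate at $H(M_i)/\big(I(\boldsymbol M;C\mid K)+\Theta(\epsilon)\big)$. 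Reconciling the two --- squeezing the public message down to the conditional load while still driving the error to zero --- is what forces the detour through the vanishing-error region and the zero-versus-$\epsilon$ equivalence, and is the crux of the argument.
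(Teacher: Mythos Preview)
Your overall strategy matches the paper's: both block the code into $n$ copies, compress the public transmission down to $\approx nI(\boldsymbol M;C\mid K)$ bits, absorb the residual $\epsilon$-fraction of decoding failures at negligible extra rate, and finally invoke the zero-versus-$\epsilon$ equivalence of \cite{laneff11}. The two key technical steps are realized differently, though, and one of the differences exposes a gap.

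For the compression step, the paper does not freeze $k^n$ and losslessly source-code $C^n$. Instead it keeps $K^n$ as common randomness shared by all legitimate parties and applies the \emph{covering lemma}: for each $k^n$ there is a codebook of $2^{n(I(\boldsymbol M;C\mid K)+\epsilon')}$ sequences in $\mathcal C^n$ such that, with high probability, some codeword is jointly typical with $(\boldsymbol m^n,k^n)$ according to $p(C,K,\boldsymbol M)$; the transmitter sends only that index. Your lossless route reaches rate $H(C\mid K)$, which equals $I(\boldsymbol M;C\mid K)$ only when the encoder is deterministic. Your workaround ``fix the private randomness $W$'' does make the encoder deterministic, but it changes the code: the new denominator becomes $H(C\mid K,W{=}w)$, and since $\mathbb E_w\bigl[H(C\mid K,W{=}w)\bigr]=I(\boldsymbol M;C\mid K,W)=I(\boldsymbol M;C\mid K)+I(\boldsymbol M;W\mid K,C)\geq I(\boldsymbol M;C\mid K)$, there is in general no choice of $w$ that brings it back down to the original $I(\boldsymbol M;C\mid K)$. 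So for randomized encoders your argument proves the lemma with the denominator replaced by a possibly larger quantity --- a weaker statement --- and the paper does apply this lemma to codes with private randomization (Theorem~\ref{Lemma_Equivalent_epsilon_zero}(b)). The covering lemma sidesteps the issue because it is a \emph{lossy} description of $C^n$ at the mutual-information rate, not a lossless one at the conditional-entropy rate.

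For the residual-error step, the paper does not expurgate coordinates. It notes that $\frac1n H(\boldsymbol M^n\mid \widehat{\boldsymbol M}^n)\to 0$ by Fano and then adds a Slepian--Wolf layer (on $N$ i.i.d.\ repetitions of the length-$n$ block) so that each receiver, with $\widehat{M}_i^{nN}$ as side information, recovers $M_i^{nN}$ exactly with vanishing probability of error at asymptotically zero extra public rate. Your ``excise the offending coordinates'' idea is delicate because the bad coordinate set depends on the message realization $\boldsymbol m^n$ once $k^n$ is fixed; the Slepian--Wolf step handles this uniformly and is what the paper actually does.
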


\section{Proofs}
\label{sec:proofs}

\subsection{Proof of Theorem \ref{Main_theorem}}

\textbf{Proof of {(c)$\mapsto$(b)} for both linear and non-linear cases}: Take a conventional index code $C$ and messages $M_i$ achieving rate tuple
$$(\frac{[r_1-r_{k_1}]_+}{r_k}-\epsilon, \frac{[r_2-r_{k_2}]_+}{r_k}-\epsilon, \cdots, \frac{[r_t-r_{k_t}]_+}{r_k}-\epsilon).$$
We construct a new code on the same message sets, and a common keys $K$ on the same alphabet set as $C$, \emph{i.e.,} $|\mathcal{K}|=|\mathcal{C}|$. We use one-time pad and add $C$ with the common key $K$ and broadcast it. The receivers can uncover the original $C$ since they have access to $K$, but it remains hidden from the adversary. Observe that if the original index code is linear, the new index code is also linear.

The rates of the new code is:
\begin{align*}(\frac{[r_1-r_{k_1}]_+}{r_k}-\epsilon, &\frac{[r_2-r_{k_2}]_+}{r_k}-\epsilon, \cdots,\\&\frac{[r_t-r_{k_t}]_+}{r_k}-\epsilon,1,0,0,\cdots, 0)
\\=\alpha\cdot &({[r_1-r_{k_1}]_+}-\epsilon{r_k}, {[r_2-r_{k_2}]_+}-\epsilon{r_k}, \cdots,\\&~{[r_t-r_{k_t}]_+}{r_k}-\epsilon{r_k},r_k,0,0,\cdots, 0),
\end{align*}
where $\alpha=1/r_k$. Letting $\epsilon$ converge to zero, we get the desired result.

\textbf{Proof of {(b)$\mapsto$(a)} for both linear and non-linear cases}: For the non-linear case, it suffices to show that if
$$\alpha\cdot\left(r_1, r_2, \cdots, r_t,r_k,0,0,\cdots, 0\right)\in \mathcal{R}_{\mathsf{Secure}},$$
then for any non-negative $r_{k_1}, \cdots, r_{k_t}$ one can find some $\alpha'>0$ such that
$$\alpha'\cdot(r_1+r_{k_1},r_2+r_{k_2},\cdots,r_t+r_{k_t}, r_k, r_{k_1}, \cdots, r_{k_t})\in \mathcal{R}_{\mathsf{Secure}}.$$
A similar statement is sufficient for the proof of the linear case.
Roughly speaking, the idea is to take a code with messages $M_i$ and a common key $K$. Then we introduce private keys $K_i$ and expand the size of the message $M_i$ by the size of $K_i$. The new $K_i$ bits of $M_i$ are securely transmitted by taking their XOR with the symbols of the private key $K_i$. Again observe that if the original index code was linear, the new index code is also linear.
For a rigorous argument, assume that we start with an index code with public communication $C$. We then have
$\log|\mathcal{M}_i|=\alpha r_{i}\log|\mathcal{C}|$ and
$\log|\mathcal{K}|=\alpha r_{k}\log|\mathcal{C}|$ in the original code. For the new code, we set the size of the messages to be
$\log|\mathcal{M}_i|=\alpha(r_{i}+r_{k_i})\log|\mathcal{C}|$; the size of the common key to be $\log|\mathcal{K}|=\alpha r_{k}\log|\mathcal{C}|$, and the size of private keys to be $\log|\mathcal{K}_i|=\alpha r_{k_i}\log|\mathcal{C}|$. The size of the public communication in the new code that we construct is $\log|\mathcal{C}|+\sum_{i=1}^t\log|\mathcal{K}_i|$, as we are sending $\sum_{i=1}^t\log|\mathcal{K}_i|$ additional XORs. Therefore, the rate tuple of the new code is 
$$\alpha'\cdot(r_1+r_{k_1},r_2+r_{k_2},\cdots,r_t+r_{k_t}, r_k, r_{k_1}, \cdots, r_{k_t})\in \mathcal{R}_{\mathsf{Secure}}$$
where $$\alpha'=\frac{\alpha\log|\mathcal{C}|}{\log|\mathcal{C}|+\sum_{i=1}^t\log|\mathcal{K}_i|}=\frac{\alpha}{1+\sum_{i=1}^tr_{k_i}}.$$

\textbf{Proof of {(b)$\mapsto$(c)} for both linear and non-linear cases}: The linear case is immediate from Theorem \ref{Main_theorem_Linear}. For the non-linear case, we need to show that if
\begin{align*}
\exists \alpha>0:\quad\alpha\cdot(r_1,r_2,\cdots,r_t, r_k, 0, \cdots, 0)\in \mathcal{R}_{\mathsf{Secure}}
\end{align*}
Then \begin{align*}
(\frac{r_1}{r_k}, \frac{r_2}{r_k}, \cdots, \frac{r_t}{r_k})\in \mathcal{R}.\end{align*}
Take a secure index code with messages $M_i$ for $i\in[t]$ and common key $K$ whose rate vector is close to $(r_1,r_2,\cdots,r_t, r_k, 0, \cdots, 0)$. Let $C$ be the public communication of this code. Then $\log|\mathcal{K}|/\log|\mathcal{C}|$ is close to $r_k$ and $\log|\mathcal{M}_i|/\log|\mathcal{C}|$ is close to $r_i$. Hence,
$\log|\mathcal{M}_i|/\log|\mathcal{K}|$ is close to $r_i/r_k$.

Assuming that the messages $M_i$ for $i\in[t]$ and common key $K$ are uniform and mutually independent of each other, we have
\begin{align}
H(\boldsymbol{M})&=H(\boldsymbol{M}|C)+I(\boldsymbol{M};C)\nonumber
\\&=H(\boldsymbol{M}|C)\label{eqM1o}
\\&\leq H(\boldsymbol{M},K|C)\nonumber\\
&=H(\boldsymbol{M}|K,C)+H(K|C)\nonumber\\
&\leq H(\boldsymbol{M}|K,C)+H(K),\nonumber\end{align}
where equality \eqref{eqM1o} comes from perfect secrecy condition. 
Hence,\begin{align}
H(K)&\geq I(\boldsymbol{M};K,C)\nonumber\\
&=I(\boldsymbol{M};C|K)+I(\boldsymbol{M};K)\nonumber\\
&=I(\boldsymbol{M};C|K)\label{eqM2o}.
\end{align}
where equality \eqref{eqM2o} is due to independence of $\boldsymbol{M}$ and $K$.

As our code is zero-error perfectly secure achievable, it is also $\epsilon$-error perfectly secure achievable. Then, by Lemma \ref{Lemma_Rel_IC_SIC}, the rate vector \begin{align}\bigg(\frac{H(M_1)}{I(\boldsymbol{M};C|K)},\frac{H(M_2)}{I(\boldsymbol{M};C|K)},\cdots,\frac{H(M_t)}{I(\boldsymbol{M};C|K)}\bigg)\label{eqn:MCKM}\end{align} belongs to the conventional index coding problem rate region. Therefore, by relation \eqref{eqM2o}, if we replace $I(\boldsymbol{M};C|K)$ by $H(K)$ in equation \eqref{eqn:MCKM}, we get that the rate vector
$$\bigg(\frac{H(M_1)}{H(K)},\frac{H(M_2)}{H(K)},\cdots,\frac{H(M_t)}{H(K)}\bigg)$$
is in the zero-error conventional index coding region. Observe that $\log|\mathcal{M}_i|/\log|\mathcal{K}|$ could be made as close as we desire to $r_i/r_k$. This completes the proof.

We remark that one can have a simpler argument and avoid the use of Lemma \ref{Lemma_Rel_IC_SIC} if the transmitter uses deterministic encoding, \emph{i.e.,} when there is no private randomness and $C$ is a deterministic function of $M$ and $K$. Observe that
\begin{align}
H(K)&\geq I(\boldsymbol{M};C|K)\nonumber\\
&=H(C|K)\label{eqM3o}
\\&\geq \min_{k}H(C|K=k).\nonumber
\end{align}
where inequality \eqref{eqM3o} follows from the fact that $C$ is a function of $(\boldsymbol M, K)$.

If we fix a value of $K=k$, we get a zero-error index code.  Therefore, there exists a zero-error index code whose public communication has length less than or equal to $H(K)=\log|\mathcal{K}|$. The rate vector corresponding to this index code is coordinatewise greater than or equal to
\begin{align*}
(\frac{r_1}{r_k}, \frac{r_2}{r_k}, \cdots, \frac{r_t}{r_k}).\end{align*}
Again as the previous, $\log|\mathcal{M}_i|/\log|\mathcal{K}|$ could be made as close as we desire to $r_i/r_k$, and the proof is concluded.\color{black}

\textbf{Proof of {(a)$\mapsto$(b)}}:

We begin with the linear case, \emph{i.e.,} 
\begin{align*} \exists \alpha>0:\quad \alpha\cdot(r_1,r_2,\cdots,r_t, r_k, r_{k_1}, \cdots, r_{k_t})\in \mathcal{R}_{\mathsf{Secure-Linear}},\end{align*}
implies that \begin{align*}\exists \alpha>0:\quad\alpha\cdot([r_1-r_{k_1}]_+,[r_2-r_{k_2}]_+,\cdots, [r_t-r_{k_t}]_+, r_k, 0, \cdots, 0)\in \mathcal{R}_{\mathsf{Secure-Linear}}.
\end{align*}

Take a sequence of linear secure zero-error index codes with rate vectors approaching 
$$\alpha\cdot(r_1,r_2,\cdots,r_t, r_k, r_{k_1}, \cdots, r_{k_t})$$
for some $\alpha>0$. Let $(l_i, l, l_k, l_{k_i})$ for $i\in[t]$ be a code from this sequence. Then we can apply Theorem \ref{Main_theorem_Linear} to this code to construct a conventional zero-error linear index code with messages of size $[l_i-l_{k_i}]_+$ and $l_k$ symbols of public communication. If we have a secret key of size $l_k$, we can use one-time pad and XOR it with the $l_k$ symbols of public communication. This implies that we can find a secure zero-error index code with messages of size $[l_i-l_{k_i}]_+$, public communication and common key of size $l_k$. This corresponds to the following rate vector
\begin{align*}&\frac{1}{l_k}\cdot([l_1-l_{k_1}]_+,[l_2-l_{k_2}]_+,\cdots, [l_t-l_{k_t}]_+, l_k, 0, \cdots, 0)
=\\&
\frac{l}{l_k}\cdot(\frac{[l_1-l_{k_1}]_+}{l},\frac{[l_2-l_{k_2}]_+}{l},\cdots, \frac{[l_t-l_{k_t}]_+}{l}, \frac{l_k}{l}, 0, \cdots, 0)\end{align*}
which tends to
$$\frac{1}{r_k}\cdot([r_1-r_{k_1}]_+,[r_2-r_{k_2}]_+,\cdots, [r_t-r_{k_t}]_+, r_k, 0, \cdots, 0).$$

This completes the proof for the linear case. Next, we consider the general non-linear case. We need to show that
\begin{align*}\exists \alpha>0:\quad \alpha\cdot(r_1,r_2,\cdots,r_t, r_k, r_{k_1}, \cdots, r_{k_t})\in \mathcal{R}_{\mathsf{Secure}},\end{align*}
implies that
\begin{align*}\exists \alpha>0:\quad\alpha\cdot([r_1-&r_{k_1}]_+,[r_2-r_{k_2}]_+,\cdots,\\& [r_t-r_{k_t}]_+, r_k, 0, \cdots, 0)\in \mathcal{R}_{\mathsf{Secure}}.
\end{align*}

As the rate vector $(r_1,r_2,\cdots,r_t, r_k, r_{k_1}, \cdots, r_{k_t})$ is zero-error perfectly secure achievable, it is also $\epsilon$-error perfectly secure achievable. Then, using Lemma \ref{Lemma_Eliminate_Private_Keys}, by eliminating private keys, the rate vector $([r_1-r_{k_1}]_+,[r_2-r_{k_2}]_+,\cdots,[r_t-r_{k_t}]_+, r_k, 0, \cdots, 0)$ is $\epsilon$-error perfectly secure achievable, too. We have constructed a code with \emph{asymptotically} zero probability of error, not exactly zero probability of error as required in our model. To complete the proof, one is needed to prove that if $(r_1,r_2,\cdots,r_t,r_k,0,\cdots,0)$ is $\epsilon$-error perfectly secure achievable, there exists $\alpha$ so that $\alpha\cdot(r_1,r_2,\cdots,r_t,r_k,0,\cdots,0)$ is perfectly secure zero-error achievable. But this follows from part (b) of Theorem \ref{Lemma_Equivalent_epsilon_zero}.

%%%%%%%%%%%%%%%%%%%%%%%%%%%%%%%%%%%%%%%%%%%%%%%%%%%
%%%%%%%%%%%%%%%%%%%Proof of Lemma 3%%%%%%%%%%%%%%
%%%%%%%%%%%%%%%%%%%%%%%%%%%%%%%%%%%%%%%%%%%%%%%%%%%

\subsection{Proof of Theorem \ref{Main_theorem_Linear}}
Assume that there exists a zero-error secure linear index code $C$ with key lengths $l_k,l_{k_i} (i\in[t])$ and private randomness of length $l_w$. We assume that $l$ equations are created by the transmitter from the message symbols and the private and public keys. Without loss of generality, we can assume that there is no zero-error secure index code $C'$ with 
$$(l'_1, \cdots, l'_t)=(l_1, \cdots, l_t)$$
but $l'\leq l$, $l'_k\leq l_k$, $l'_{k_i}\leq l_{k_i}$, $l'_w\leq l_w$ and
$$(l', l'_k, l'_{k_1}, \cdots, l'_{k_t}, l'_w)\neq (l, l_k, l_{k_1}, \cdots, l_{k_t}, l_w).$$
We refer to this as the minimality assumption. It implies that the code matrix $\Pi$ given in equation \eqref{eqn5} has no all-zero column and the matrix $\Pi$ is full row rank. Otherwise, there exists a key bit or a message bit which has not been used in producing the index code, or the length of the index code could be reduced. 

Our goal is to show that the minimality assumption implies that $l_w=0$ and furthermore one can use elementary row and other valid operations to convert the code matrix $\Pi$ to the following form, while preserving decodability and security of the code.\vspace{-0.5cm}
\begin{equation}
\bordermatrix{&&&&\cr
                &\Lambda^{(0)} &  \boldsymbol{0}&\boldsymbol{0}&\ldots & \boldsymbol{0}&\VR &&\cr
               & \boldsymbol{0}  &\Lambda^{(1)} &\boldsymbol{0} &\ldots & \boldsymbol{0}&\VR &&\cr
                & \boldsymbol{0} & \boldsymbol{0} & \Lambda^{(2)} &\cdots& \boldsymbol{0}&\VR&{\Scale[2]{\Gamma}}&\cr
                & \vdots & \vdots &\vdots&\ddots & \vdots&\VR&&\cr
                & \boldsymbol{0} & \boldsymbol{0} &\boldsymbol{0}&\ldots & \Lambda^{(t)}&\VR &&},\label{Secure_IC_One_Shot_Linear}
\end{equation}
where  $\Lambda^{(0)}=I_{l_k\times l_k}$, $\Lambda^{(i)}=I_{l_{k_i}\times l_{k_i}}$ are identity matrices, and $\Gamma$ is a $l\times(\sum_{i=1}^tl_i)$ submatrix, which gets multiplied by the message vector. This statement implies, in particular, that the number of rows of matrix $\Pi$ should be equal to $l=l_k+\sum_{i=1}^tl_{k_i}$.

 With  elementary row operations, we bring the matrix $\Pi$ in its row echelon form, calling it $\widetilde{\Pi}$. Since the operations are invertible, the decodability and reliability constraints are preserved. Since $\Pi$ was full row rank, $\widetilde{\Pi}$ does not have an all-zero row. By the minimality assumption, we also do not have an all-zero column in $\widetilde{\Pi}$. 

Each row of $\widetilde{\Pi}$ has the form $[0~0~\cdots~0~1~*~*~\cdots *]$. The symbol $1$ appearing in this row cannot correspond to a message symbol since the equation for this row will then correspond to a linear combination of only message symbols, which is a contradiction with the security assumption (observe that in equation \eqref{eqn5}, message symbols come at the end of the vector). Therefore, the symbol $1$ should correspond to either $K(i)$ or $K_j(i)$ or $W(i)$ for some $i$. We call a coordinate of $K$, $K_j$ or $W$ to be \emph{marked} if it corresponds to a symbol $1$ appearing as the first non-zero element of a row of $\widetilde{\Pi}$. Observe that each coordinate of $K$, $K_j$ or $W$ that is marked occurs only in one row of $\widetilde{\Pi}$ because of its row echelon form. 

We claim that all coordinates of $K$ and $K_j$ and $W$ are marked. Otherwise, if for instance $K(i)$ is not marked for some $i$, we can fix it to be zero (effectively reducing the length of $K$ by one). Decoding is still possible, since we had that given any arbitrary choice for $K(i)$, decoding is possible; hence decoding is possible when $K(i)$ is fixed to be zero for some $i$. The new code is also secure since every equation contains a marked element of one of the vectors $K$, $K_j$ and $W$, and that element occurs in only and only that equation. Presence of these marked elements make the equations secure from the perspective of the adversary, as in one-time pad (mask the equations). 
Thus, the minimality assumption implies that all coordinates of $K$ and $K_j$ and $W$ are marked, and $\widetilde{\Pi}$ has the following form
\begin{equation}
\bordermatrix{&&&&\cr
                &\Lambda^{(0)} &  \boldsymbol{0}&\boldsymbol{0}&\ldots & \boldsymbol{0}&\boldsymbol{0}&\VR &&\cr
               & \boldsymbol{0}  &\Lambda^{(1)} &\boldsymbol{0} &\ldots & \boldsymbol{0}&\boldsymbol{0}&\VR &&\cr
                & \boldsymbol{0} & \boldsymbol{0} & \Lambda^{(2)} &\cdots& \boldsymbol{0}&\boldsymbol{0}&\VR&{\Scale[2]{\Gamma}}&\cr
                & \vdots & \vdots &\vdots&\ddots & \vdots&\vdots&\VR&&\cr
                & \boldsymbol{0} & \boldsymbol{0} &\boldsymbol{0}&\ldots & \Lambda^{(t)}&\boldsymbol{0}&\VR &&\cr,
                & \boldsymbol{0} & \boldsymbol{0} &\boldsymbol{0}&\ldots & 0&\Lambda^{(t+1)}&\VR &&},\label{Secure_IC_One_Shot_Linear2}
\end{equation}
where  $\Lambda^{(0)}=I_{l_k\times l_k}$, $\Lambda^{(i)}=I_{l_{k_i}\times l_{k_i}}$ and $\Lambda^{(t+1)}=I_{l_{w}\times l_{w}}$ are identity matrices. Now, observe that the equations that are marked by coordinates of $W$ are masked from all the receivers, as well as the adversary (each of these equations including the XOR with one and only one of the elements of $W$). Therefore, they are not useful in decoding of the messages by the receivers and can be removed. This implies that $l_w=0$, and we get that $\widetilde{\Pi}$ is in the desired form given in equation \eqref{Secure_IC_One_Shot_Linear}.

We have shown that corresponding to any arbitrary linear zero-error perfectly secure code, there is another linear zero-error perfectly secure index code for the same message sets that uses secret keys of lengths $(l_k, l_{k_1}, \cdots, l_{k_t})$ with the following property: each of the $l$ symbols of the public message are of the form
\begin{align}C_i=K(p)+\sum_{j=1}^t\sum_{p=1}^{l_j}\gamma^{i}_{jp}M_{j}(p)\label{eqn:pub1}\end{align}
for some $p\in[l_k]$, or
\begin{align}C_i=K_{i}(p)+\sum_{j=1}^t\sum_{p=1}^{l_j}\gamma^{i}_{jp}M_{j}(p)\label{eqn:priv1}\end{align}
for some $i\in[t]$ and $p\in[l_{k_i}]$. In other words, the expression of each of the code symbols $C_i$   contains only one symbol from one of the secret keys.

Consider the first receiver. It has access to $l$ linear equations of the form given in \eqref{eqn:pub1} (as it has $K$), and $l_1$ linear equations of the form given in \eqref{eqn:priv1} (as it has $K_1$). Therefore, we call the $l$ equations as public to all receiver, and the $l_1$ equations as private to the receiver one. We now use Lemma \ref{lemma2} with $X=M_1$ and $Y=(M_2, M_3, \cdots, M_t)$, $AX+BY$ being equations of the form  given in \eqref{eqn:pub1}, and $CX+DY$ being the equations of the form given in \eqref{eqn:priv1}. This lemma then implies that there is a subset of the entries of $M_1$ of size at most $l_1$ such that from the values of these entries and the $l$ public equations, receiver one can recover $M_1$. Let us fix $M_1$ on these $l_1$ locations and reveal its value to all the receivers. The number of free entries of $M_1$, \emph{i.e.,} the new length of the message of $M_1$, would then be greater than or equal to $l-l_1$. This message can be decoded by the first receiver using the $l$ public linear equations of the form given in \eqref{eqn:pub1}. The fact that we have fixed some of entries of $M_1$ and given it to other receivers can only help them recover their messages (because if they did not know $M_1$, we are giving them some partial information about $M_1$). A similar procedure can be done for other receivers. This would imply that with $l$ linear equations, it is possible for the receiver $i$ to recover $l-l_i$ symbols using $l$ public symbols of message. This is the claim we wanted to prove. The proof is complete.

\begin{lemma}\label{lemma2}
Let $X_{1\times n}$ and $Y_{1\times m}$ be two arbitrary column vectors in a field $\mathbb{F}$. Assume that matrices $A_{l\times n}$, $B_{l\times m}$, $C_{l_1\times n}$ and $D_{l_1\times m}$ are such that the vector $X$ can be recovered from the values of $AX+BY$ and $CX+DY$. Then, there is a subset of indices $\mathcal S\subset [n]$ with $|\mathcal S|\leq l_1$, such that it is possible to find $X$ from $AX+BY$ and $X(i), i\in \mathcal S$. Here $X(i)$ is used to denote the $i$-th entry of vector $X$.
\end{lemma}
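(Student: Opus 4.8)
The plan is to work with the linear map $(X,Y)\mapsto(AX+BY,\,CX+DY)$ and exploit the hypothesis that $X$ is a deterministic function of this output. First I would observe that recoverability of $X$ means precisely that there is no nonzero vector $(x,y)$ with $Ax+By=0$ and $Cx+Dy=0$ but $x\neq 0$; equivalently, the kernel of $\begin{pmatrix}A&B\\C&D\end{pmatrix}$ projects to zero in the $X$-coordinates, i.e.\ every $(x,y)$ in this kernel has $x=0$. I would like to conclude that, after ignoring all but $l_1$ well-chosen coordinates of $X$, the map $X\mapsto AX+BY$ (for the reduced $X$, with the fixed coordinates treated as known constants absorbed into the ``$+BY$'' part) already determines $X$.

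\textbf{Main argument via a dimension/rank count.} The key step is the following. Let $V=\{(x,y): Ax+By=0\}$, a subspace of $\mathbb{F}^{n+m}$, and let $\pi:V\to\mathbb{F}^n$ be the projection onto the $X$-coordinates, with image $U=\pi(V)\subseteq\mathbb{F}^n$. The hypothesis says: if $(x,y)\in V$ and additionally $Cx+Dy=0$, then $x=0$; in other words, the map $C(\cdot)+D(\cdot)$ restricted to $V$ is injective on the $x$-part, so in particular $\dim U\leq \mathrm{rank}\begin{pmatrix}C&D\end{pmatrix}\big|_V\leq l_1$. Now I would choose a set $\mathcal S\subseteq[n]$ of coordinates such that the coordinate functionals $\{e_i^\top : i\in\mathcal S\}$ restrict to a spanning set of the dual of $U$ — this can be done with $|\mathcal S|\leq\dim U\leq l_1$. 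The claim is then that knowing $AX+BY$ together with $X(i),i\in\mathcal S$ pins down $X$: if $X,X'$ are two candidates consistent with the same value of $AX+BY$ and agreeing on $\mathcal S$, then $(X-X',Y-Y')\in V$, so $X-X'\in U$, and $X-X'$ vanishes on $\mathcal S$; but coordinates indexed by $\mathcal S$ separate points of $U$ by construction, hence $X-X'=0$.

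\textbf{Where the care is needed.} The routine linear algebra is the bound $\dim U\leq l_1$ and the selection of $\mathcal S$; the subtle point I expect to be the main obstacle is making the recovery argument fully rigorous about \emph{what information the decoder has}. The decoder does not know $Y$ (it only sees $AX+BY$), so I must phrase everything in terms of the \emph{affine fiber}: the set of all $(X,Y)$ producing a given value of $AX+BY$ is a coset of $V$, and two points of this coset have $X$-parts differing by an element of $U$. This is exactly what the argument above uses, so it goes through, but one has to resist the temptation to argue as if $Y$ were available. A secondary minor point is the edge case where $U=\{0\}$ (then $\mathcal S=\emptyset$ works) or where $l_1\geq n$ (then $\mathcal S=[n]$ trivially works); both are consistent with the stated bound $|\mathcal S|\leq l_1$. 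Finally I would remark that the construction of $\mathcal S$ is constructive: pick a basis of $U$, write it as rows of a $(\dim U)\times n$ matrix, and take $\mathcal S$ to be the indices of any maximal set of linearly independent columns.
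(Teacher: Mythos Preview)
Your proof is correct and takes a genuinely different route from the paper's. The paper argues iteratively: it walks through the $l_1$ rows of $CX+DY$ one at a time, and for each row either discards it (if $X$ is already recoverable without it) or locates a single coordinate $X(i_1)$ whose value, together with the equations already in hand, reproduces that row; adding $i_1$ to $\mathcal S$ and proceeding yields $|\mathcal S|\le l_1$ after at most $l_1$ steps. This is essentially a Gaussian-elimination/exchange argument.

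Your argument instead computes the ``ambiguity space'' $U=\pi(V)$ directly and bounds its dimension in one shot via the kernel inclusion $\ker\phi\subseteq\ker\pi$ (where $\phi:V\to\mathbb F^{l_1}$ is $(x,y)\mapsto Cx+Dy$), giving $\dim U\le\dim\phi(V)\le l_1$; then you pick $\mathcal S$ to index a column basis for $U$. The dimension step is stated a bit tersely (``injective on the $x$-part''), but the underlying inequality $\dim(V/\ker\pi)\le\dim(V/\ker\phi)$ is exactly what you need and is correct. Your approach is cleaner and even yields the sharper bound $|\mathcal S|=\dim U$, while the paper's approach is more explicitly algorithmic and makes the exchange between a private equation and a revealed coordinate of $X$ visible at each step, which matches how the lemma is used in the surrounding proof of Theorem~\ref{Main_theorem_Linear}.
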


\begin{proof}
Consider the first row of $CX+DY$, which is a linear equation in terms of the entries of $X$ and $Y$, say $\sum\alpha_iX(i)+\sum\beta_jY(j)$. If we can find $X$ without having access to this row, we discard it and proceed to the second row. Otherwise, there is an entry of $X$, say $i_1$ that cannot be decoded without the linear equation $\sum\alpha_iX(i)+\sum\beta_jY(j)$. In other words, $X(i_1)$ is a linear combination of the linear equations that we have, with the equation $\sum\alpha_iX(i)+\sum\beta_jY(j)$ being given a non-zero weight. Then if we put $i_1$ in the set $\mathcal{S}$ of the entries that we know, we can conversely use it to recover the linear equation $\sum\alpha_iX(i)+\sum\beta_jY(j)$. Therefore, having $X(i_1)$ is equivalent to having $\sum\alpha_iX(i)+\sum\beta_jY(j)$. Continuing with this procedure, we can construct the set $\mathcal S$ and its size will be less than or equal to the number of rows of $CX+DY$, which is $l_1$. 
\end{proof}

%%%%%%%%%%%%%%%%%%%%%%%%%%%%%%%%%%%%%%%%%%%%%
%%%%%%%%%%%%%% Proof of Theorem 3 %%%%%%%%%%%
%%%%%%%%%%%%%%%%%%%%%%%%%%%%%%%%%%%%%%%%%%%%%

\subsection{Proof of Theorem \ref{Lemma_Equivalent_epsilon_zero}}
\subsubsection{Proof of part (a)}
The proof of part (a) consists of two steps.  We first show the rate region equivalency of $\epsilon$-error strongly secure code to the $\epsilon$-error perfectly secure code. Then, we say that if a rate region is $\epsilon$-error weakly secure achievable, it is also $\epsilon$-error strongly secure achievable.

\emph{From Strong to Perfect Secrecy for Free:} We are supposing a strong secrecy condition, \emph{i.e.,} the independence between $\boldsymbol{M}$ and $C$ no longer exists, and instead, the following inequality holds:
\begin{equation*}
\|p(\boldsymbol{m},c)-p(\boldsymbol{m})p(c)\|_1\leq\epsilon.
\end{equation*}
We would like to make $I(\boldsymbol{M};C)=0$, without using 
 additional key bits. Using the coupling method, one can find $\boldsymbol{M}', C'$ having the marginal pmf $p(\boldsymbol{m})p(c)$ and jointly distributed $\boldsymbol{M}, C$ with such that 
$$p\big((\boldsymbol{M}, C)\neq (\boldsymbol{M}', C')\big)\leq \|p(\boldsymbol{m},c)-p(\boldsymbol{m})p(c)\|_1\leq \epsilon.$$
Let $p_{\boldsymbol{M}, C, \boldsymbol{M}',C'}$ denote the induced joint distribution by the coupling method. Observe that $\boldsymbol{M}'$ has the uniform marginal distribution $p(m)$ and is independent of $C'$. The encoder proceeds as follows: the encoder assumes $\boldsymbol{M}'$ to be the intended messages to the receivers, produces $\boldsymbol{M}, C, C'$ via the conditional distribution $p_{\boldsymbol{M}, C, C'|\boldsymbol{M}'}$ and broadcasts $C'$. We have perfect secrecy as $C'$ is independent of $\boldsymbol{M}'$. Since with probability $1- \epsilon$, random variables $\boldsymbol{M}', C'$ are equal to $\boldsymbol{M}, C$, the total error probability will be increased by at most $\epsilon$ that can be made arbitrarily small. This completes the proof.

%%%%%%%%%%%%%%%%%%%%%%%%%%%%%%%%%%%%%%%%

\emph{From Weak to Strong Secrecy for Free:} Suppose we have a code $C$ satisfying the weak secrecy condition, \emph{i.e.,} $I(\boldsymbol{M};C)\leq\epsilon\cdot H(\boldsymbol{M})$, and error probability $\epsilon$. From Fano's inequality, we have $H(\boldsymbol{M}|\hat{\boldsymbol{M}})\leq \delta$, where $\hat{\boldsymbol{M}}$ is the vector of reconstructions by the decoders and $\delta=h(\epsilon)+\epsilon\log|\boldsymbol{\mathcal{M}}|$. 

Consider $n$ i.i.d.~repetitions of the code. Assuming that $R_i=\log|\mathcal M_i|$, we get $\lvert\mathcal{M}_i^n\rvert=2^{nR_i}$. Let $$\tilde{R}_i=R_i-2\epsilon\cdot H(\boldsymbol{M})-2\delta\cdot t, \qquad \bar{R}_i=2\delta$$ where $t$ is the number of nodes. We randomly and independently bin $\mathcal{M}_i^n$ into $2^{n\tilde{R}_i}$ and $2^{n\bar{R}_i}$ bins for $i\in [t]$, and denote the bin indices by $\widetilde{M}_i$ and $\bar{M}_i$.  Theorem 1 of \cite{yaar14} provides sufficient condition for the following to hold: for any given $\eta>0$, there exists an integer $n$ and such that 
\begin{equation}
\label{eq:total_var_strong_sec}
\mathbb{E}\lVert P_{\widetilde{\boldsymbol{M}}\bar{\boldsymbol{M}}C^n}-p_{\widetilde{\boldsymbol{M}}}^Up_{\bar{\boldsymbol{M}}}^U p_{C^n}\rVert\leq \eta
\end{equation}
where the expected value is over all random binning indices and $p^U$ is the uniform distribution. The sufficient condition for the above to hold is that 
 for each $\mathcal{S}\subseteq [t]$, the binning rate vector $(\tilde{R}_1, \bar{R}_1,\tilde{R}_2, \bar{R}_2,\cdots,\tilde{R}_t, \bar{R}_t)$ satisfies the following inequality,
\begin{equation}
\label{OSRB_Rate_Condition}
\sum_{i\in\mathcal{S}}{\tilde{R}_i+\bar{R}_i}< H(M_{\mathcal{S}}|C)=H(M_{\mathcal{S}})-I(M_{\mathcal{S}};C)=\sum_{i\in\mathcal{S}}{R_i}-I(M_{\mathcal{S}};C).
\end{equation}
Furthermore, by the Slepian-Wolf theorem, we can recover $M_i^n$ from $(\hat{M}_i^n, \bar{M_i})$ with probability $1-\epsilon$ (for $n$ sufficiently large) for each $i\in [t]$ if
\begin{equation}
\label{SW_Rate_Condition}
\bar{R}_i> H(M_i|\hat{M}_i), \qquad \forall i\in[t].
\end{equation}

If equations \eqref{OSRB_Rate_Condition} and \eqref{SW_Rate_Condition} hold, one can find a deterministic binning such that 
\begin{equation}
\label{eq:total_var_strong_sec22}
\lVert p_{\widetilde{\boldsymbol{M}}\bar{\boldsymbol{M}}C^n}-p_{\widetilde{\boldsymbol{M}}}^Up_{\bar{\boldsymbol{M}}}^U p_{C^n}\rVert\leq \eta
\end{equation}
holds and furthermore, with probability $1-\epsilon$, $M_i^n$ can be recovered from $(\hat{M}_i^n, \bar{M_i})$.

We claim that  equations \eqref{OSRB_Rate_Condition} and \eqref{SW_Rate_Condition} hold for our choice of $\tilde{R}_i=R_i-2\epsilon\cdot H(\boldsymbol{M})-2\delta\cdot t$ and $\bar{R}_i=2\delta$. Observe that the right hand of the inequality \eqref{SW_Rate_Condition} is less than or equal to $h(\epsilon)+\epsilon\log|\mathcal{M}_i|$ which is itself less than or equal to $\delta$. To verify equation \eqref{OSRB_Rate_Condition}, we utilize the fact that the right hand of the inequality \eqref{OSRB_Rate_Condition} is greater than $\sum_{i\in\mathcal{S}}{R_i}-\epsilon\cdot H(\boldsymbol{M})$ by the assumption of weak secrecy.

Equation \eqref{eq:total_var_strong_sec22} implies that we have strong security if we take $(C^n, \bar{\boldsymbol{M}})$ as the public message for the new code and take $\tilde{M}_i$ as the messages, we wish to transmit. The fact that $M_i^n$ can be recovered from $(\hat{M}_i^n, \bar{M_i})$ implies that the $i$-th node is able to use $C^n$ to first find $\hat{M}_i^n$ and then $\bar{M_i}$ to recover $M_i^n$ with probability $1-\epsilon$. Then, from $\hat{M}_i^n$, the node can recover its message  $\tilde{M}_i$ as its bin index. The overall error probability will be at most $t\epsilon$ by the union bound.

We should only note that here the messages $\tilde{M}_i$ are almost uniform and mutually independent, as from \eqref{eq:total_var_strong_sec22}, we have
$$\lVert p_{\widetilde{\boldsymbol{M}}}-p_{\widetilde{\boldsymbol{M}}}^U\rVert\leq \eta.$$
But using the coupling method, as in the previous part, we can couple $\widetilde{\boldsymbol{M}}$ with a mutually independent and uniform messages $\widetilde{\boldsymbol{M}}'$ such that $\widetilde{\boldsymbol{M}}=\widetilde{\boldsymbol{M}}'$ with high probability. Therefore, we can impose the uniformity and independence constraint by slightly increasing the error probability of the code, and while preserving the strong security constraint.

The rate of the original code was
$$r_i=\frac{\log|\mathcal M_i|}{\log|\mathcal{C}|}=\frac{R_i}{\log|\mathcal{C}|}.$$
Rate of the new code is
\begin{align*}\tilde r_i&=\frac{\tilde{R}_i}{\log|\mathcal{C}|+\bar{R}_i}
\\&=\frac{R_i-2\epsilon\cdot H(\boldsymbol{M})-2\delta\cdot t}{\log|\mathcal{C}|+2\delta}
\\&=\frac{R_i-2\epsilon\cdot \sum_{i=1}^t R_i-2(h(\epsilon)+\epsilon\sum_{i=1}^t R_i)\cdot t}{\log|\mathcal{C}|+2(h(\epsilon)+\epsilon\sum_{i=1}^t R_i)}
\\&=\frac{r_i-2\epsilon\cdot \sum_{i=1}^t r_i-2(v+\epsilon\sum_{i=1}^t r_i)\cdot t}{1+2(v+\epsilon\sum_{i=1}^t r_i)}
\end{align*}
where $v=h(\epsilon)/\log|\mathcal{C}|\leq h(\epsilon)$. Letting $\epsilon$ converge to zero, we get that $\tilde{r}_i\rightarrow r_i,~i\in[t]$.

%%%%%%%%%%%%%%%%%%%%%%%%%%%%%%%%%%%%%%%%

\subsubsection{Proof of part (b)}
{\color{black} We would like to show that if a rate vector $(r_1,r_2,\cdots,r_t,r_k,r_{k_1},r_{k_2},\cdots,r_{k_t})$ is $\epsilon$-error perfectly secure achievable, then there exist some positive multiplicative constant $\alpha$ so that $$\alpha\cdot (r_1,r_2,\cdots,r_t,r_k,r_{k_1},r_{k_2},\cdots,r_{k_t})$$ could be achieved by zero-error perfectly secure codes. By Lemma \ref{Lemma_Eliminate_Private_Keys}, the $\epsilon$-error perfectly secure achievability of $(r_1,r_2,\cdots,r_t,r_k,r_{k_1},r_{k_2},\cdots,r_{k_t})$ leads to the $\epsilon$-error perfectly secure achievability of $([r_1-r_{k_1}]_+,[r_2-r_{k_2}]_+,\cdots,[r_t-r_{k_t}]_+, r_k, 0, \cdots, 0)$. In the following, we show that there exist some $\alpha>0$ so that $\alpha\cdot([r_1-r_{k_1}]_+,[r_2-r_{k_2}]_+,\cdots,[r_t-r_{k_t}]_+, r_k, 0, \cdots, 0)$ is zero-error perfectly secure achievable. 
This claim would establish the desired result by using part $(b)\mapsto (a)$ of the Theorem \ref{Main_theorem} and adding back the private keys. 

Therefore, it remains to show that there exist some $\alpha>0$ so that $\alpha\cdot([r_1-r_{k_1}]_+,[r_2-r_{k_2}]_+,\cdots,[r_t-r_{k_t}]_+, r_k, 0, \cdots, 0)$ is zero-error perfectly secure achievable. To proceed, it suffices to show that if $(r_1,r_2,\cdots,r_t,r_k,\\0,\cdots,0)$ is achievable by a sequence of codes with vanishing probability of error and perfect secrecy conditions, there exist some $\alpha>0$ so that $\alpha\cdot(r_1,r_2,\cdots,r_t,r_k,0,\cdots,0)$ is zero-error perfectly secure achievable. 

To do this, take an $\epsilon$-error code with corresponding variables $K, C$, and $M_i$ for $i\in[t]$ where $M_i$ and $K$ are uniform and mutually independent random variables. Also let $\widehat{M}_i$ to be the reconstruction by receiver $i$. Since private randomization at the transmitter is allowed, $C$ is not necessarily a deterministic function of $(K, \boldsymbol{M})$.}

As before, we have
\begin{align}
H(\boldsymbol{M})&=H(\boldsymbol{M}|C)+I(\boldsymbol{M};C)\nonumber
\\&=H(\boldsymbol{M}|C)\label{eqM1}
\\&\leq H(\boldsymbol{M},K|C)\nonumber\\
&=H(\boldsymbol{M}|K,C)+H(K|C)\nonumber\\
&\leq H(\boldsymbol{M}|K,C)+H(K),\end{align}
where equality \eqref{eqM1} comes from perfect secrecy condition. Hence,\begin{align}
H(K)&\geq I(\boldsymbol{M};K,C)\nonumber\\
&=I(\boldsymbol{M};C|K)+I(\boldsymbol{M};K)\nonumber\\
&=I(\boldsymbol{M};C|K)\label{eqM2}
\end{align}
where equality \eqref{eqM2} is due to independence of $\boldsymbol{M}$ and $K$. Hence $H(K)\geq I(\boldsymbol{M};C|K)$. Thus, the rate vector of the code is 
\begin{align*}\left(\frac{H(M_1)}{\log|\mathcal{C}|}, \right.&\left.\frac{H(M_2)}{\log|\mathcal{C}|}, \cdots, \frac{H(M_t)}{\log|\mathcal{C}|}, \frac{H(K)}{\log|\mathcal{C}|}, 0, 0, \cdots, 0\right)=\\
\frac{I(\boldsymbol{M};C|K)}{\log|\mathcal{C}|}&\left(\frac{H(M_1)}{I(\boldsymbol{M};C|K)}, \frac{H(M_2)}{I(\boldsymbol{M};C|K)}, \cdots,\right.
\\&\qquad \left.\frac{H(M_t)}{I(\boldsymbol{M};C|K)}, \frac{H(K)}{I(\boldsymbol{M};C|K)}, 0, 0, \cdots, 0 \right)
\end{align*}
The term ${I(\boldsymbol{M};C|K)}/{\log|\mathcal{C}|}$ is a multiplicative factor. 
Since ${H(K)}/{I(\boldsymbol{M};C|K)}\geq 1$ from equation \eqref{eqM2}, to show that we can reach the rate vector
\begin{align*}&\left(\frac{H(M_1)}{I(\boldsymbol{M};C|K)}, \frac{H(M_2)}{I(\boldsymbol{M};C|K)}, \cdots,\right.
\\&\qquad \left.\frac{H(M_t)}{I(\boldsymbol{M};C|K)}, \frac{H(K)}{I(\boldsymbol{M};C|K)}, 0, 0, \cdots, 0 \right)\end{align*}
with  perfectly secure zero-error codes, it suffices to show that there is a sequence of perfectly secure zero-error codes whose rate vectors converge to
\begin{align*}&\left(\frac{H(M_1)}{I(\boldsymbol{M};C|K)}, \frac{H(M_2)}{I(\boldsymbol{M};C|K)}, \cdots,\frac{H(M_t)}{I(\boldsymbol{M};C|K)}, 1, 0, 0 , \cdots, 0\right).\end{align*}
But the rate of $r_k=1$ means that the size of common key and public communication are the same. Therefore one can always use one-time pad to ensure perfect security. It only remains to show that there is a sequence of conventional zero-error codes whose rate vectors converge to
$$\left(\frac{H(M_1)}{I(\boldsymbol{M};C|K)}, \frac{H(M_2)}{I(\boldsymbol{M};C|K)}, \cdots, \frac{H(M_t)}{I(\boldsymbol{M};C|K)}\right).$$
But this follows from Lemma \ref{Lemma_Rel_IC_SIC}.

%%%%%%%%%%%%%%%%%%%%%%%%%%%%%%%%%%%%%%%%%%%%
%%%%%%%%%%%%% Proof of Lemma 1  %%%%%%%%%%%%
%%%%%%%%%%%%%%%%%%%%%%%%%%%%%%%%%%%%%%%%%%%%

\subsection{Proof of Lemma \ref{Lemma_Eliminate_Private_Keys}}
We need to show that if $(r_1,r_2,\cdots,r_t, r_k, r_{k_1}, \cdots, r_{k_t})$ is $\epsilon$-error perfectly secure achievable, by eliminating private keys, $([r_1-r_{k_1}]_+,[r_2-r_{k_2}]_+,\cdots,[r_t-r_{k_t}]_+, r_k, 0, \cdots, 0)$ is $\epsilon$-error perfectly secure achievable.

Take an arbitrary index code $C, K, M_i$ and $K_i$ for $i\in[t]$. We create a new secure index code that does not have private keys and is able to securely and reliably achieve message rates $(\log|\mathcal{M}_i|-\log|\mathcal{K}_i|)/\log|\mathcal{C}|$ for $i\in[t]$ and the same common key rate  $\log|\mathcal{K}|/\log|\mathcal{C}|$. This would conclude the proof.

In the original code, we assume that $M_i$'s, $K$ and $K_i$'s are mutually independent. Let us now consider a different scenario where the receivers do not have access to $K_i$'s. In other words, $K_i$ for $i\in[t]$ is simply treated as a private randomness of the transmitter. Thus, only the common key is shared with the legitimate receivers and the private keys, $K_i$, are not available at the receivers. Fig. \ref{fig:Secure_IC_lemma} illustrates the secure index coding scheme by ignoring the private keys in the receivers. In the figure we use $\bar{Y_i}$ to denote the total information available at the receiver $i$ when $K_i$'s are not available. Here, the adversary cannot learn anything about the messages. However, the problem is that the legitimate receivers cannot decode their intended messages.

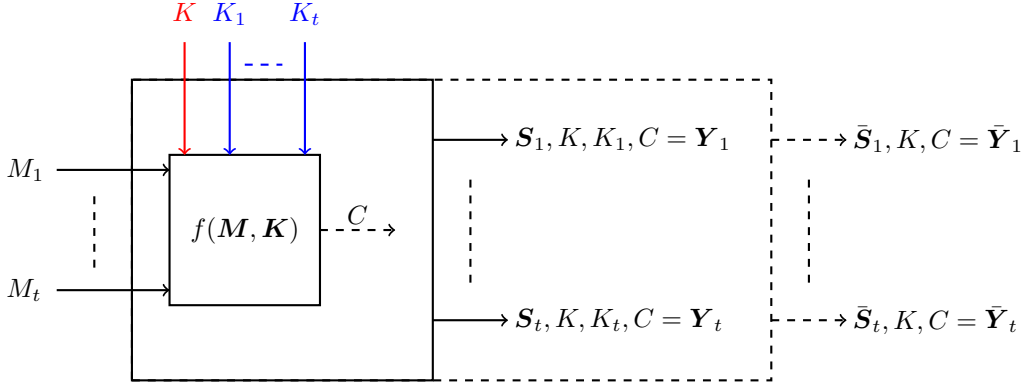
\begin{figure}[t]
\centering
\begin{tikzpicture}[scale=1]

\draw[thick] (0,0) rectangle (4,4);
\draw[thick,dashed] (0,0) rectangle (8.5,4);
\draw[thick] (.5,1) rectangle (2.5,3) node [black,inner sep=10pt, pos=.5, anchor=center] {$f(\boldsymbol{M},\boldsymbol{K})$};

\draw[thick,->] (-1,2.8) -- (.5,2.8) node [black,inner sep=2pt, pos=-.5, anchor=west] {${M_1}$};
\draw[thick,->] (-1,1.2) -- (.5,1.2) node [black,inner sep=2pt, pos=-.5, anchor=west] {${M_t}$};
\draw[thick,dashed] (-.5,1.5) -- (-.5,2.5);

\draw[thick,->,color=red] (.7,4.5) -- (.7,3) node [black,inner sep=2pt, pos=-.4, anchor=north] {\color{red}$K$};
\draw[thick,->,color=blue] (1.3,4.5) -- (1.3,3) node [black,inner sep=2pt, pos=-.4, anchor=north] {\color{blue}$K_1$};
\draw[thick,->,color=blue] (2.3,4.5) -- (2.3,3) node [black,inner sep=2pt, pos=-.4, anchor=north] {\color{blue}$K_t$};
\draw[thick,dashed,color=blue] (1.5,4.2) -- (2,4.2);

\draw[thick,->] (4,3.2) -- (5,3.2)node [black,inner sep=2pt, pos=1, anchor=west] {${\boldsymbol{S}_1}, K,K_1, C={\boldsymbol{Y}_1}$};
\draw[thick,->] (4,.8) -- (5,.8)node [black,inner sep=2pt, pos=1, anchor=west] {${\boldsymbol{S}_t}, K,K_t, C={\boldsymbol{Y}_t}$};
\draw[thick,dashed] (4.5,1.3) -- (4.5,2.7);

\draw[thick,dashed,->] (8.5,3.2) -- (9.5,3.2)node [black,inner sep=2pt, pos=1, anchor=west] {${\bar{\boldsymbol{S}}_1}, K, C={\bar{\boldsymbol{Y}}_1}$};
\draw[thick,dashed,->] (8.5,.8) -- (9.5,.8)node [black,inner sep=2pt, pos=1, anchor=west] {${\bar{\boldsymbol{S}}_t}, K, C={\bar{\boldsymbol{Y}}_t}$};
\draw[thick,dashed] (9,1.3) -- (9,2.7);

\draw[thick,dashed,->] (2.5,2) -- (3.5,2)node [black,inner sep=2pt, pos=.5, anchor=south] {$C$};
\end{tikzpicture}
\caption{The schematic of secure index coding scenario in which the private keys $K_i$'s are not available at the receivers.} 
\label{fig:Secure_IC_lemma}
\end{figure}

We construct a $t$-input, $t$-output interference channel as follows: the input of the $i$-th transmitter is $M_i$, and the output of the $i$-th receiver is $\bar{Y_i}$. Using the result of \cite[p. 133]{Elgamal2011} by treating interference as noise, rates $(R_1, \cdots, R_t)$ is asymptotically achievable with repeated use of this interference channel, if $R_i\leq I(M_i;\bar{Y_i})$. Observe that
\begin{align*}
I(M_i;\bar{Y_i})&=I(M_i;\bar{Y_i},K_i)-I(M_i;K_i|\bar{Y_i})\\
&=I(M_i;Y_i)-I(M_i;K_i|\bar{Y_i})\\
&\overset{(a)}{\geq}H(M_i)-h(\epsilon)-\epsilon\cdot\log|\mathcal{M}_i|-I(M_i;K_i|\bar{Y_i})\\
&\geq H(M_i)-H(K_i)-h(\epsilon)-\epsilon\cdot\log|\mathcal{M}_i|,
\\&=\log|\mathcal{M}_i|-\log|\mathcal{K}_i|-h(\epsilon)-\epsilon\cdot\log|\mathcal{M}_i|,
\end{align*}
where $(a)$ follows from Fano's inequality and the fact that $Y_i$ gives an $\epsilon$-error approximate of $M_i$. In other words, as the receiver $i$ using $Y_i$ can recover $M_i$ with the $\epsilon$ probability of error, $I(M_i;Y_i)$ is approximately equal to $H(M_i)$. Moreover, $h(\epsilon)$ is the binary entropy.

Therefore, messages of rates $H(M_i)-H(K_i)$ can be sent with $N$ uses of the original code. The input distribution on $M_i^N$ will be uniform over the codewords, which is no longer uniform. However, the adversary would not learn anything about the messages since  perfect security constraint holds as long as the common key is uniform and mutually independent of the messages; the marginal distribution of the messages is not important (see equation \eqref{eqn:addedM} and the justification given for it).  Hence, using the constructed code $C$, we could achieve the rate vector $([r_1-r_{k_1}]_+,[r_2-r_{k_2}]_+,\cdots,[r_t-r_{k_t}]_+, r_k, 0, \cdots, 0)$ with \emph{asymptotically} zero probability of error and perfect secrecy.

%%%%%%%%%%%%%%%%%%%%%%%%%%%%%%%%%%%%%%%%%%%%
%%%%%%%%%%%%% Proof of Lemma 2  %%%%%%%%%%%%
%%%%%%%%%%%%%%%%%%%%%%%%%%%%%%%%%%%%%%%%%%%%

\subsection{Proof of Lemma \ref{Lemma_Rel_IC_SIC}}

Consider a secure $\epsilon$-error code with corresponding variables $K, C$, and $M_i$ for $i\in[t]$ where $M_i$ and $K$ are uniform and mutually independent random variables. 
It has been shown in \cite{laneff11} that in the conventional index coding, zero and asymptotic error capacities are exactly the same. Therefore, we need to show that there is a sequence of conventional vanishing error codes whose rate vectors converge to
$$\left(\frac{H(M_1)}{I(\boldsymbol{M};C|K)}, \frac{H(M_2)}{I(\boldsymbol{M};C|K)}, \cdots, \frac{H(M_t)}{I(\boldsymbol{M};C|K)}\right).$$

From the perspective of the legitimate parties $K$ is a common randomness, independent of the messages. We assume that the receiver $i$ uses decoding function, as in equation \eqref{eqn:defg-i1},
$$g_i: \mathcal{C}\times \mathcal{S}_i\times\mathcal{K}\rightarrow \mathcal{M}_i,$$ to produce $\hat{M}_i$.

The above code induces a joint distribution $p(\boldsymbol{M},C,K,\widehat{\boldsymbol{M}})$. Let us take $n$ i.i.d.~ repetitions of $(\boldsymbol{M}, K)$. We would like to use the covering lemma \cite[Lemma 3.3]{Elgamal2011}. If $R=I(\boldsymbol{M};C|K)+\epsilon'$, there is a codebook $\hat C_{k^n}^n(1), \hat C_{k^n}^n(2), \cdots, \hat C_{k^n}^n(2^{nR})$ of sequences in $\mathcal{C}^n$ for each $k^n$,  such that with high probability, given $k^n, \boldsymbol{m}^n$, one can find an index $j$ where $(\hat C_{k^n}^n(j), k^n, \boldsymbol{m}^n)$ are jointly typical according to  $p(C,K,\boldsymbol{M})$. 

Now, let us construct a conventional index code (no secrecy) with messages $M_i^n$ for $i\in[t]$ and a shared common randomness $K^n$ among all the parties. Having observed $(k^n, \boldsymbol{m}^n)$, the transmitter finds an index $j$ where $(\hat C_{k^n}^n(j), k^n, \boldsymbol{m}^n)$ are jointly typical. Index $j$ is sent over the public channel. Sending this index requires only $I(\boldsymbol{M};C|K)+\epsilon'$ bits on average. Let us denote $\hat C_{k^n}^n(j)$ by $c^n$. Now, receiver $i$ gets a sequence $c^n$, the common randomness $K^n$ and its side information about other user's messages. The decoder applies $n$ copies of the same decoding function $g_k(\cdot)$ to the sequences $c^n, k^n$ and its side information about the messages (as if we were running $n$ identical copies of the original code and $c^n$ was $n$ copies of the message from the $n$ instances of the code). This results in reconstructions $\widehat{\boldsymbol{m}}^n$ that is jointly typical with $(c^n, k^n, \boldsymbol{m}^n)$ with high probability according to $p(\boldsymbol{M},C,K,\widehat{\boldsymbol{M}})$. This implies that in particular, $(\widehat{\boldsymbol{m}}^n, \boldsymbol{m}^n)$ will be jointly typical according to  $p(\boldsymbol{M},\widehat{\boldsymbol{M}})$ with high probability. But since in the pmf induced by the code, error probability $P(\boldsymbol{M}\neq \widehat{\boldsymbol{M}})\leq \epsilon$, $(\widehat{\boldsymbol{m}}^n, \boldsymbol{m}^n)$ are jointly typical only if  $\widehat{\boldsymbol{m}}(j)=\boldsymbol{m}(j)$ for $(1-\epsilon)n$ values of $j\in[n]$. 

Therefore, we have shown so far that with transmission of $R=n(I(\boldsymbol{M};C|K)+\epsilon')$ bits, we can ensure that with high probability, $\boldsymbol{M}^n$ matches $\widehat{\boldsymbol{M}}^n$ on $(1-\epsilon)$ fraction of its entries. However, we need the whole $\boldsymbol{M}^n$ to be equal to $\widehat{\boldsymbol{M}}^n$ with high probability. We resolve this below, but observe that since the length of the messages are $H(M_i^n)=nH(M_i)$, we have indeed reached the index code rate
$$\left(\frac{H(M_1)}{I(\boldsymbol{M};C|K)+\epsilon'}, \frac{H(M_2)}{I(\boldsymbol{M};C|K)+\epsilon'}, \cdots, \frac{H(M_t)}{I(\boldsymbol{M};C|K)+\epsilon'}\right).$$ 
Let us go back to the fact that with high probability $1-\delta$, we have that $\boldsymbol{M}^n$ matches $\widehat{\boldsymbol{M}}^n$ on $(1-\epsilon)$ fraction of its entries, and not entirely. We show that this can be fixed with a negligible decrease in index coding rates. The idea is that by Fano's inequality $$\frac 1n H(\boldsymbol{M}^n|\widehat{\boldsymbol{M}}^n)\leq \frac 1n+\delta H(\boldsymbol{M})+(1-\delta)\epsilon H(\boldsymbol{M})$$
can be made as close as we want to zero. Thus, using Slepian-Wolf theorem, conveying $\boldsymbol{M}$ with side information $\widehat{\boldsymbol{M}}^n$ at the decoder will require negligible amount of communication. To achieve this, one has to take $N$ i.i.d.\ repetitions of $\boldsymbol{M}^n$ and $\widehat{\boldsymbol{M}}^n$, and then use the  Slepian-Wolf theorem to ensure that repetitions of $\boldsymbol{M}^n$ are recovered with high probability.

%%%%%%%%%%%%%%%%%%%%%%%%%%%
%%%%%%%%%%%%%%%%%%%%%%%%%%%

\section{Conclusion}
\label{Conclusion_Sec}

In this paper, we studied the index coding problem in the presence of an eavesdropper. Assuming that a common as well as a set of  dedicated private keys are shared among the transmitter and legitimate receivers, we obtained a condition on keys' entropies by which the index code could be transmitted securely. In Theorem \ref{Main_theorem}, we made a relationship between the secure index coding problem to one without secrecy, and showed that the generalized one-time pad strategy is optimal up to a multiplicative constant for the secure index coding problem. In other words, we showed that the conventional index coding rate region determines the cone of the secure rate region, which is equal to the cone of the generalized one-time pad strategy. Theorem \ref{Main_theorem_Linear} presents a similar statement to the Theorem \ref{Main_theorem} for the linear case. Moreover, we showed in Theorem \ref{Lemma_Equivalent_epsilon_zero} that relaxing the secrecy condition from perfect to weak secrecy does not change the rate region when we have an $\epsilon$-error decoding condition. As a future work, one can study the effect of adversary's side information and/or capability of corrupting the public communication.

%%%%%%%%%%%%%%%%%%%%%%
%%%%%%%%%%%%%%%%%%%%%%
%%%% Bibliography %%%%

\bibliographystyle{IEEEtran}

\begin{thebibliography}{10}
\providecommand{\url}[1]{#1}
\csname url@samestyle\endcsname
\providecommand{\newblock}{\relax}
\providecommand{\bibinfo}[2]{#2}
\providecommand{\BIBentrySTDinterwordspacing}{\spaceskip=0pt\relax}
\providecommand{\BIBentryALTinterwordstretchfactor}{4}
\providecommand{\BIBentryALTinterwordspacing}{\spaceskip=\fontdimen2\font plus
\BIBentryALTinterwordstretchfactor\fontdimen3\font minus
  \fontdimen4\font\relax}
\providecommand{\BIBforeignlanguage}[2]{{%
\expandafter\ifx\csname l@#1\endcsname\relax
\typeout{** WARNING: IEEEtran.bst: No hyphenation pattern has been}%
\typeout{** loaded for the language `#1'. Using the pattern for}%
\typeout{** the default language instead.}%
\else
\language=\csname l@#1\endcsname
\fi
#2}}
\providecommand{\BIBdecl}{\relax}
\BIBdecl

\bibitem{biko98}
Y.~Birk and T.~Kol, ``Informed-source coding-on-demand (iscod) over broadcast
  channels,'' in \emph{INFOCOM'98. Seventeenth Annual Joint Conference of the
  IEEE Computer and Communications Societies. Proceedings. IEEE}, vol.~3.\hskip
  1em plus 0.5em minus 0.4em\relax IEEE, 1998, pp. 1257--1264.

\bibitem{nete12}
M.~J. Neely, A.~S. Tehrani, and Z.~Zhang, ``Dynamic index coding for wireless
  broadcast networks,'' in \emph{INFOCOM, 2012 Proceedings IEEE}.\hskip 1em
  plus 0.5em minus 0.4em\relax IEEE, 2012, pp. 316--324.

\bibitem{AlonLub2008}
N.~Alon, E.~Lubetzky, U.~Stav, A.~Weinstein, and A.~Hassidim, ``Broadcasting
  with side information,'' in \emph{Foundations of Computer Science, 2008.
  FOCS'08. IEEE 49th Annual IEEE Symposium on}.\hskip 1em plus 0.5em minus
  0.4em\relax IEEE, 2008, pp. 823--832.

\bibitem{lust09}
E.~Lubetzky and U.~Stav, ``Nonlinear index coding outperforming the linear
  optimum,'' \emph{Information Theory, IEEE Transactions on}, vol.~55, no.~8,
  pp. 3544--3551, 2009.

\bibitem{bazi11}
Z.~Bar-Yossef, Y.~Birk, T.~Jayram, and T.~Kol, ``Index coding with side
  information,'' \emph{Information Theory, IEEE Transactions on}, vol.~57,
  no.~3, pp. 1479--1494, 2011.

\bibitem{tesa12}
A.~S. Tehrani, A.~G. Dimakis, and M.~J. Neely, ``Bipartite index coding,'' in
  \emph{Information Theory Proceedings (ISIT), 2012 IEEE International
  Symposium on}.\hskip 1em plus 0.5em minus 0.4em\relax IEEE, 2012, pp.
  2246--2250.

\bibitem{blkl13}
A.~Blasiak, R.~Kleinberg, and E.~Lubetzky, ``Broadcasting with side
  information: Bounding and approximating the broadcast rate,''
  \emph{Information Theory, IEEE Transactions on}, vol.~59, no.~9, pp.
  5811--5823, 2013.

\bibitem{blkl10}
------, ``Index coding via linear programming,'' \emph{arXiv preprint
  arXiv:1004.1379}, 2010.

\bibitem{arbban13}
F.~Arbabjolfaei, B.~Bandemer, Y.-H. Kim, E.~Sasoglu, and L.~Wang, ``On the
  capacity region for index coding,'' in \emph{Information Theory Proceedings
  (ISIT), 2013 IEEE International Symposium on}.\hskip 1em plus 0.5em minus
  0.4em\relax IEEE, 2013, pp. 962--966.

\bibitem{shka14}
K.~Shanmugam, A.~G. Dimakis, and M.~Langberg, ``Graph theory versus minimum
  rank for index coding,'' \emph{arXiv preprint arXiv:1402.3898}, 2014.

\bibitem{bazi06}
Z.~Bar-Yossef, Y.~Birk, T.~S. Jayram, and T.~Kol, ``Index coding with side
  information,'' in \emph{Foundations of Computer Science, 2006. FOCS '06. 47th
  Annual IEEE Symposium on}, Oct 2006, pp. 197--206.

\bibitem{peeters1996orthogonal}
R.~Peeters, ``Orthogonal representations over finite fields and the chromatic
  number of graphs,'' \emph{Combinatorica}, vol.~16, no.~3, pp. 417--431, 1996.

\bibitem{elrspr10}
S.~El~Rouayheb, A.~Sprintson, and C.~Georghiades, ``On the index coding problem
  and its relation to network coding and matroid theory,'' \emph{Information
  Theory, IEEE Transactions on}, vol.~56, no.~7, pp. 3187--3195, 2010.

\bibitem{effelr12}
M.~Effros, S.~E. Rouayheb, and M.~Langberg, ``An equivalence between network
  coding and index coding,'' \emph{arXiv preprint arXiv:1211.6660}, 2012.

\bibitem{bhattad2005weakly}
K.~Bhattad and K.~R. Narayanan, ``Weakly secure network coding,'' \emph{NetCod,
  Apr}, vol. 104, 2005.

\bibitem{bloch2011physical}
M.~Bloch and J.~Barros, \emph{Physical-layer security}.\hskip 1em plus 0.5em
  minus 0.4em\relax Cambridge University Press, 2011.

\bibitem{jaggi2007resilient}
S.~Jaggi, M.~Langberg, S.~Katti, T.~Ho, D.~Katabi, and M.~M{\'e}dard,
  ``Resilient network coding in the presence of byzantine adversaries,'' in
  \emph{INFOCOM 2007. 26th IEEE International Conference on Computer
  Communications. IEEE}.\hskip 1em plus 0.5em minus 0.4em\relax IEEE, 2007, pp.
  616--624.

\bibitem{yeung2008information}
R.~W. Yeung, \emph{Information theory and network coding}.\hskip 1em plus 0.5em
  minus 0.4em\relax Springer, 2008.

\bibitem{dau2011secure}
S.~H. Dau, V.~Skachek, and Y.~M. Chee, ``On secure index coding with side
  information,'' in \emph{Information Theory Proceedings (ISIT), 2011 IEEE
  International Symposium on}.\hskip 1em plus 0.5em minus 0.4em\relax IEEE,
  2011, pp. 983--987.

\bibitem{shannon49}
C.~E. Shannon, ``Communication theory of secrecy systems,'' \emph{Bell system
  technical journal}, vol.~28, no.~4, pp. 656--715, 1949.

\bibitem{laneff11}
M.~Langberg and M.~Effros, ``Network coding: Is zero error always possible?''
  in \emph{Communication, Control, and Computing (Allerton), 2011 49th Annual
  Allerton Conference on}.\hskip 1em plus 0.5em minus 0.4em\relax IEEE, 2011,
  pp. 1478--1485.

\bibitem{yaar14}
M.~H. Yassaee, M.~R. Aref, and A.~Gohari, ``Achievability proof via output
  statistics of random binning,'' \emph{Information Theory, IEEE Transactions
  on}, vol.~60, no.~11, pp. 6760--6786, 2014.

\bibitem{Elgamal2011}
A.~El~Gamal and Y.-H. Kim, \emph{Network information theory}.\hskip 1em plus
  0.5em minus 0.4em\relax Cambridge University Press, 2011.

\end{thebibliography}

\end{document}